\documentclass[12pt]{amsart}
\usepackage{amsmath,amssymb,amsthm}

\usepackage{fullpage}

\newtheorem{lem}{Lemma}[section]

\newtheorem{thm}{Theorem}[section]

\theoremstyle{definition}

\theoremstyle{remark}

\theoremstyle{remark}
\newtheorem{remark}{Remark}[section]
\numberwithin{equation}{section}

\newcommand{\set}[1]{\left\{#1\right\}}

\newcommand{\Z}{{\mathbb Z}}
\newcommand{\R}{{\mathbb R}}
\newcommand{\OO}{\mathcal{O}}

\begin{document}

\title{Nonlinear Schr\"odinger equations with strongly singular potentials}

\author{Jacopo Bellazzini}
\address{Dipartimento di Matematica Applicata ``U. Dini'', University of Pisa, via Buonarroti 1/c, 56127 Pisa, ITALY}%
\email{j.bellazzini@ing.unipi.it}%
\author{Claudio Bonanno}
\address{Dipartimento di Matematica Applicata ``U. Dini'', University of Pisa, via Buonarroti 1/c, 56127 Pisa, ITALY}%
\email{bonanno@mail.dm.unipi.it}%

\thanks{Authors are partially supported by M.I.U.R project PRIN2007 ``Variational and topological methods in the study of nonlinear phenomena''}

\maketitle

\begin{abstract}
In this paper we look for standing waves for nonlinear
Schr\"odinger equations 
$$
i\frac{\partial \psi }{\partial t}+\Delta \psi - g(|y|) \psi
-W^{\prime }(\left| \psi \right| )\frac{\psi }{\left| \psi \right|
}=0
$$
with cylindrically symmetric potentials $g$ vanishing at infinity and non-increasing,
and a $C^1$ nonlinear term satisfying weak assumptions. In
particular we show the existence of standing waves with
non-vanishing angular momentum with prescribed $L^2$ norm. The
solutions are obtained via a minimization argument, and the proof
is given for an abstract functional which presents lack of
compactness. As a particular case we prove the existence of standing waves with
non-vanishing angular momentum for the nonlinear hydrogen atom equation.
\end{abstract}

\section{Introduction} \label{intro}
In recent years much attention has been devoted to eigenvalue
problems for elliptic equations, mainly for applications to
nonlinear field equations, such as Schr\"odinger and Klein-Gordon
equations.

Let $N\ge 3$ and $k=2$. We write $x \in \R^N$ as $x=(y,z) \in \R^k
\times \R^{N-k}$. Consider the nonlinear Schr\"odinger equation
(NLS) in $\R^N$
\begin{equation} \label{pina}
i\frac{\partial \psi }{\partial t}+\Delta \psi - g(|y|) \psi
-W^{\prime }(\left| \psi \right| )\frac{\psi }{\left| \psi \right|
}=0
\end{equation}
with a potential $g$ vanishing at infinity and non-increasing, and
$W$ a nonlinear term of the kind studied in \cite{Beres-Lions}.
The existence of concentrated solutions of (\ref{pina}) 
can be obtained by looking for solutions of the form
\begin{equation} \label{ansatz}
\psi(t,x)=u(x)\, e^{i(\ell\, \theta(y)-\lambda t)}, \qquad u \geq
0,\ \lambda \in \R, \ \ell \in \Z
\end{equation}
where $\theta(y)$ is the angular variable in the plane
$(y_1,y_2)$. In particular if $\ell\neq 0$ these solutions have non-vanishing
angular momentum and are called vortices. With this ansatz, the NLS reduces to
\begin{equation} \label{nls-static}
-\triangle u + \left( \frac{\ell^2}{|y|^2} + g(|y|) \right)
u+W'(u)=\lambda u
\end{equation}
The problem of existence of vortices for nonlinear field equations
has been studied recently for $g\equiv 0$ in \cite{BeVi},
\cite{BBR-1}, \cite{BBR-2} and \cite{BaRo1}. In these papers,
solutions of (\ref{nls-static}) have been found as critical points
of a functional $J(u)$ constrained to the manifold of functions
with fixed $L^2$ norm. In the context of NLS, this constraint is
natural since the $L^2$ norm of a solution is an invariant of
motion. Moreover, it could be important to obtain points of
minimum to have orbital stability for the standing waves
(\ref{ansatz}). This is for example the case of solutions with
$\ell =0$. The main difficulty in this minimization problem is the
lack of compactness due to translations along the $z$-coordinates.
For this reason it was difficult to prove that the obtained
solution had the desired $L^2$ norm. This was solved in
\cite{BaRo1} under further assumptions on $W$.

In this paper, letting $2\le k \le N$, we consider the general
eigenvalue problem
\begin{equation} \label{ellittica}
-\triangle u + V(|y|) u+W'(u)=\lambda u
\end{equation}
where $V(s): \R^+ \to \R^+$ is assumed to be vanishing at infinity
and non-increasing, and the nonlinear term $W$ is of the kind
studied in \cite{Beres-Lions}. For precise assumptions see Section
\ref{problema-ellittico}. In particular (\ref{nls-static}) is of this
form with $V(|y|)=\left( \frac{\ell^2}{|y|^2} + g(|y|) \right)$ where
$g(|y|)$ can be strongly singular as $|y|^{-\alpha}$ with $\alpha >0$.
We prove the existence of solutions of
(\ref{ellittica}) with any desired $L^2$ norm large enough. These
solutions are obtained solving the minimization problem of the
functional
$$
J(u):=\int_{\R^N} \left( \frac 12 |\nabla u|^2 +\frac 12 V(|y|)\,
u^2+W(u) \right)dx
$$
restricted to cylindrically symmetric functions and constrained to
the manifold
$$
B_{\rho} := \set{ \int_{\R^N} u^2\, dx= \rho^2}
$$
The eigenvalue $\lambda$ is found as the Lagrange multiplier of
the minimization problem.

As far as we know, the only existence result for elliptic
equations with singular potentials of the form $V(|y|) \sim
|y|^{-\alpha}$ with $\alpha \not=2$ is contained in \cite{Ba2},
where it is considered the case $V(|y|) = |y|^{-\alpha}$ with
$\lambda=0$, and results depend on the relation between $\alpha$
and the growth conditions of $W$.

In Section \ref{problema-ellittico} we introduce the problem in
details. The proof of the existence of the constrained point of
minimum for $J$ is given in Section \ref{proof}. We first
introduce an abstract minimization problem for functionals of the
form
\begin{equation*}
I(u):=\left( \frac 12 \| u \|^2 + T(u) \right)
\end{equation*}
where $\| u \|$ is a suitable norm for functions in $H^{1}$ and $T$ is a real operator.
Under some weak assumptions on the behaviour of $T$ along
minimizing sequences, we prove in Theorem \ref{main-abs} the
existence of a point of minimum of $I$ constrained to the manifold
$B_\rho$. In particular we obtain strong convergence in $H^1$ for
any minimizing sequence. This approach has been inspired by
\cite{BelVi}, where in the case of non-singular
potentials and nonautonomous power-like nonlinear terms it was obtained
the orbital stability for standing waves of NLS.

In the application to $J$, we have $T(u):=\int W(u)dx$. The main
difficulty in dealing with such term is the lack of compactness on
the space of cylindrically symmetric functions. The idea of the
proof is the following: first we obtain an a-priori estimate to
guarantee the existence of a weak limit $\bar u$; second we prove
that $||\bar u||_{L^2}\neq 0$ (by means of a compactness lemma
contained in \cite{ESLI83}); as last step, by using the abstract
Theorem \ref{main-abs}, we show that $||\bar u||_{L^2}=\rho$ and
that $u_n\rightarrow \bar u$ strongly, and therefore $\bar u$ is
solution of (\ref{ellittica}).

Let us point out that the idea of the proof of the abstract
Theorem \ref{main-abs} can be applied in the case when $V\equiv
0$, i.e when the problem does not contain the singular term. Under
the same assumptions, we obtain the orbital stability for a large
class of NLS also involving the bilaplacian operator as in \cite{BelVi}.
 Indeed, once we know that the weak limit does not
vanish (when $I(u)<0$ for some $u$) by means of the classical
concentration-compactness lemma of Lions \cite{Li}, then Theorem \ref{main-abs}
guarantees that $\bar u$ has the right norm. Finally by the
Cazenave-Lions argument in \cite{CaLi} we have orbital stability.

Finally we show that the abstract Theorem \ref{main-abs} can be applied also to the nonlinear hydrogen atom equation
\begin{equation} \label{pina2}
i\frac{\partial \psi }{\partial t}+\Delta \psi + \frac{1}{|x|}\psi
- \Omega \psi + |\psi|^{p-2}\psi=0
\end{equation}
with $\Omega \in \R$. With the ansatz (\ref{ansatz}) with $\ell \neq 0$, equation (\ref{pina2}) reduces to
\begin{equation} \label{nls-static_hydr}
-\triangle u + \left( \frac{\ell^2}{|y|^2} + \Omega - \frac{1}{|x|} \right) u-u^{p-1}=\lambda u
\end{equation}
where now the potential $V$ depends on $x=(y,z)$ and not only on $y$. The details are given in Section \ref{hydr-sec}.

\section{The elliptic problem} \label{problema-ellittico}
We look for solutions of the equation
\begin{equation*}
-\triangle u + V(|y|)\, u + W'(u)=\lambda u
\end{equation*}
where $x=(y,z)\in \R^k\times \R^{N-k}$, with $N\ge 3$ and $2\le k
\le N$, $\lambda$ is a real parameter, and $V$ and $W$ satisfy the
following assumptions:
\begin{itemize}
\item $V:(0,+\infty) \to \R$ is measurable and
\begin{equation} \label{v-hyp1} \tag{\textbf{V1}}
V(s) \ge 0
\end{equation}
\begin{equation}\label{v-hyp2} \tag{\textbf{V2}}
\lim\limits_{s\to \infty}\, V(s) = 0
\end{equation}
\begin{equation}\label{v-hyp3} \tag{\textbf{V3}}
V(\theta s) \le V(s) \quad \forall\, \theta \ge 1
\end{equation}

\item $W:\R \to \R$ is even and of class $C^{1}$, and writing
$W(s) = \frac{\Omega}{2}\, s^{2} + R(s)$ with $\Omega\in \R$, the
following assumptions hold:
\begin{equation} \label{hyp1} \tag{\textbf{W1}}
R(s)>-b_1s^2-b_2s^{\gamma}, \text{ for some } b_1,b_2>0 \text{ and
} \gamma<2+\frac{4}{N}
\end{equation}
\begin{equation}\label{hyp2} \tag{\textbf{W2}}
|R'(s)| \leq c_1|s|^{q_1-1}+c_2|s|^{q_2-1}, \text{ for some }
c_1,c_2>0 \text{ and } 2 \le q_1 \leq q_2<\frac{2N}{N-2}
\end{equation}
\begin{equation}\label{hyp3} \tag{\textbf{W3}}
\text{ there exists $s_0\in \R^{+}$ such that } W(s_0)<0
\end{equation}
\end{itemize}

Assumptions on $V$ are very general. In particular we don't
require any regularity or boundedness. Singular potentials of the
form $|y|^{-\alpha}$, with $\alpha >0$, are a typical example to
which we are interested. Notice that (\ref{v-hyp1})
follows from (\ref{v-hyp2}) and (\ref{v-hyp3}), and it is explicitly
stated for simplicity.  
Assumptions on $W$ are classical after the paper
\cite{Beres-Lions}. Assumption (\ref{hyp1}) is necessary only to
have that $J_{\rho} > -\infty$ for $\rho$ big enough. Assumption
(\ref{hyp2}) is fundamental to show the existence of the minimum.
In this setting, we remark that we are working with a $C^{1}$
functional which is not weakly semi-continuous and with a non-compact
constraint. We will obtain the sufficient conditions for strong convergence only
for minimizing sequences. Finally assumption (\ref{hyp3}) is natural as it is
necessary for the existence of ground states for the elliptic
equation (\ref{ellittica}) with $V\equiv 0$.

To solve (\ref{ellittica}), we study the minimization problem of
the functional
\begin{equation} \label{func}
J(u):=\int_{\R^N} \left( \frac 12 |\nabla u|^2 +\frac 12 V(|y|)\,
u^2+W(u) \right)dx
\end{equation}
constrained to the manifold
$$
B_{\rho} := \set{ \int_{\R^N} u^2\, dx= \rho^2}
$$
By standard arguments, since $W$ is even we can consider only
non-negative solutions $u$, and since the functional $J$ is
invariant under the action of the group $O(k)$ of orthogonal
transformations on the first $k$ variables of $x \in \R^{N}$, we
can restrict the constraint $B_{\rho}$ to cylindrically symmetric
functions of the form $u=u(|y|,z)$. Let $\OO$ be an open subset of
$\R^{N-k}$, we use the notation $\tilde H^{1}(\R^{k}\times \OO) $
for the Hilbert space obtained as closure of
$C^{\infty}_{0}((\R^{k}\setminus \set{0})\times \OO)$ with respect
to the norm
\begin{equation} \label{norma-cilind}
\| u\|_{H}^2:=\int_{\R^{k}\times \OO}\ \left( |\nabla
u|^2+V(|y|)\, |u|^2+|u|^2 \right)dx
\end{equation}
Moreover we introduce the notation for the ``cylindrical'' part of the norm
\begin{equation} \label{norma-cilind-2}
\| u\|_{c}^2:= \int_{\R^{k}\times \OO}\ \left( |\nabla
u|^2+V(|y|)\, |u|^2 \right)dx
\end{equation}
The subspace of $\tilde H^{1}(\R^{k}\times \OO) $ of cylindrically
symmetric functions will be denoted by $H(\R^{k}\times \OO)$, and
simply by $H$ when $\OO= \R^{N-k}$, hence
$$
H := \set{u\in \tilde H^{1}(\R^{k}\times \R^{N-k})\ :\ u= u(|y|,z)}
$$
Notice that $H \subset \tilde H^{1}(\R^{k}\times \R^{N-k}) \subset H^{1}(\R^{N})$, hence we can use classical Sobolev estimates.

We now restrict the action of the functional $J$ to $H$ and define
\begin{equation}\label{minicil}
J_{\rho}=\inf_{H\cap B_{\rho}} \ J(u)
\end{equation}

Our main results is

\begin{thm} \label{mainth}
If (\ref{v-hyp1})-(\ref{v-hyp3}) and (\ref{hyp1})-(\ref{hyp3})
hold, then for $\rho$ big enough the infimum $J_{\rho}$ defined in
(\ref{minicil}) is achieved.
\end{thm}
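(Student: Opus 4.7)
My strategy is to recast the minimization as an application of the abstract Theorem~\ref{main-abs}. Since on $B_\rho$ the $L^2$-norm is fixed, set $I(u):=J(u)+\tfrac12\int_{\R^N} u^2\,dx=\tfrac12\|u\|_H^2+T(u)$, with $T(u):=\int_{\R^N} R(u)\,dx+\tfrac{\Omega-1}{2}\int_{\R^N}u^2\,dx$; minimizing $I$ on $B_\rho$ is equivalent to minimizing $J$ there. Two preliminary bounds are needed. First, $J_\rho>-\infty$, which follows from \eqref{hyp1} and the Gagliardo--Nirenberg inequality: since $\gamma<2+\tfrac{4}{N}$ and $\int u^2=\rho^2$ is fixed, one interpolates $\|u\|_{L^\gamma}$ between $L^2$ and $\dot H^1$ and absorbs it into $\tfrac14\|\nabla u\|_{L^2}^2$, yielding a bound on $J$ depending only on $\rho$. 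Second, $J_\rho<0$ for $\rho$ large: by \eqref{hyp3} pick $s_0>0$ with $W(s_0)<0$ and construct a smooth cylindrically symmetric test function $v$ equal to $s_0$ on a large cylindrical set $\{a<|y|<b\}\times\{|z|<L\}$ lying away from the singular axis $\{y=0\}$, with a thin transition layer. The bulk term $\int W(v)\,dx\sim W(s_0)\,|\mathrm{supp}\,v|<0$ dominates both the surface-type gradient term and the potential term $\int V(|y|)v^2\,dx$ (uniformly bounded on the support because $|y|\ge a$ and $V$ is non-increasing by \eqref{v-hyp3}); scaling $L$ so that $\|v\|_{L^2}=\rho$ gives $J(v)<0$ for $\rho$ large.

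Next, take a minimizing sequence $\{u_n\}\subset H\cap B_\rho$; by the coercivity above it is bounded in $H$, and up to a subsequence $u_n\rightharpoonup\bar u$ weakly in $H$. The main obstacle, which I expect to be the hardest step, is to prove that after a suitable $z$-translation one has $\bar u\neq 0$: the functional $J$ is not weakly sequentially lower semicontinuous, and since $V$ depends only on $|y|$, both $J$ and $\|\cdot\|_H$ are invariant under translations in $z\in\R^{N-k}$, destroying any global compact embedding. I would handle this by invoking the compactness lemma of Esteban--Lions \cite{ESLI83}, which exploits the cylindrical symmetry and $k\ge 2$ to give compact embedding of $H$ into $L^p(\R^k\times K)$ for every bounded $K\subset\R^{N-k}$ and every $p\in\bigl(2,\tfrac{2N}{N-2}\bigr)$. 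If no choice of translates $z_n$ produces a nonzero weak limit, a standard vanishing argument in the $z$ variable forces $\int_{\R^N}|u_n|^p\,dx\to 0$ for every such $p$, and through \eqref{hyp2} this reduces the nonlinear contribution in $J(u_n)$ to a purely quadratic expression in $\|u_n\|_{L^2}=\rho$; combined with $\|u_n\|_c^2\ge 0$ this contradicts the strict upper bound established in the first paragraph, after sharpening the test function $v$ by choosing $a$ large (using \eqref{v-hyp2}, so that $V_\max$ on $\mathrm{supp}\,v$ is small) so that its nonlinear bulk strictly beats the quadratic asymptote. Hence, replacing $u_n$ by a $z$-translate---an operation that leaves $J(u_n)$ and $\|u_n\|_{L^2}$ unchanged---yields $\bar u\neq 0$, and the translated sequence is still minimizing.

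With $\bar u\neq 0$ secured, I finally apply Theorem~\ref{main-abs}: its hypotheses on $T$ along minimizing sequences are continuity-type statements that follow from \eqref{hyp2} and the Sobolev embedding, and its conclusion delivers both $\|\bar u\|_{L^2}=\rho$ (so the weak limit attains the full mass rather than a strictly smaller value) and the strong convergence $u_n\to\bar u$ in $H$. Passing to the limit then gives $J(\bar u)=\lim_n J(u_n)=J_\rho$, so the infimum in \eqref{minicil} is achieved and the theorem follows. The crux of the argument is thus the middle step: the cylindrical Esteban--Lions compactness, together with the strict negativity of $J_\rho$, rules out the loss of mass to infinity along the non-compact $z$-direction.
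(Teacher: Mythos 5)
Your outline follows the paper's own route almost step for step: coercivity of $J$ on $B_\rho$ via the Gagliardo--Nirenberg inequality (Lemma \ref{stimprio}), strict negativity of $J_\rho$ for large $\rho$ via cylindrically symmetric test functions supported away from the axis where $V$ is small (Lemma \ref{min0}), non-vanishing of the weak limit after a $z$-translation via the Esteban--Lions compactness combined with a cube-decomposition argument (Lemmas \ref{compactness} and \ref{van}), and finally the abstract Theorem \ref{main-abs}. The reformulation $I=J+\tfrac12\int u^2$ is harmless, since on $B_\rho$ it shifts $J$ by a constant.

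The genuine gap is in the last step. The hypotheses of Theorem \ref{main-abs} on $T$ are not all ``continuity-type statements that follow from (\ref{hyp2}) and Sobolev embedding''. Condition (\ref{(2)}) is a Brezis--Lieb splitting, which needs almost-everywhere convergence of $u_n$ (from local compact embeddings) and a Young-inequality argument; this is the paper's Lemma \ref{brezis-lieb}. Condition (\ref{(3)}), which you never mention, is the dilation identity $T\bigl(u(\cdot/\theta^{2/N})\bigr)=\theta^2 T(u)$, and it is precisely the ingredient that produces the strict subadditivity $I_\rho<I_\mu+I_{\sqrt{\rho^2-\mu^2}}$ forcing $\|\bar u\|_{L^2}=\rho$; without verifying it (trivial here, but a scaling computation, not a continuity fact) your plan does not deliver the full-mass conclusion you attribute to the abstract theorem. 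Condition (\ref{(5)}) can only be checked once one already knows $\|u_n-\bar u\|_{L^2}\to 0$, so the paper first runs the subadditivity half of Theorem \ref{main-abs} using (\ref{(2)})--(\ref{(3)}) and only then verifies (\ref{(5)}) to get strong convergence in $H$ (Lemma \ref{t-ass}); this ordering is invisible in your sketch. A smaller inaccuracy: in your negativity construction the gradient term is not ``surface-type'' relative to stretching in $z$ --- for a fixed annulus $a<|y|<b$ both the bulk and the $y$-transition layers have volume of order $L^{N-k}$ --- so you must also widen the annulus, as the paper does by taking $R_n\le |y|\le 2R_n$ with $R_n\to\infty$; that construction moreover covers the case $k=N$, where there is no $z$-direction to stretch.
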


Under assumptions (\ref{v-hyp1}) and (\ref{hyp2}), the functional
$J$ in (\ref{func}) is of class $C^{1}$ on $H$, and its critical
points constrained to $B_{\rho}$ satisfy (\ref{ellittica}) for
some $\lambda \in \R$, which is the Lagrange multiplier. Hence as
a corollary of Theorem \ref{mainth} and of the Palais principle of symmetric criticality we get

\begin{thm} \label{wsol-ell}
If (\ref{v-hyp1})-(\ref{v-hyp3}) and (\ref{hyp1})-(\ref{hyp3})
hold, then for $\rho$ big enough equation (\ref{ellittica}) admits
non-negative weak solutions $u$ of $L^{2}$-norm equal to $\rho$.
\end{thm}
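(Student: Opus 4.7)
The plan is to derive this theorem as a direct corollary of Theorem \ref{mainth} by applying the Lagrange multiplier rule on the symmetric subspace $H$ and then extending to arbitrary test functions via Palais' principle of symmetric criticality. Theorem \ref{mainth} gives, for $\rho$ large enough, a minimizer $\bar u \in H \cap B_\rho$ of $J_\rho$. Since $W$ is even and $|\bar u| \in H \cap B_\rho$ with $J(|\bar u|) = J(\bar u)$, I may replace $\bar u$ by $|\bar u|$ and assume $\bar u \ge 0$.

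Next I would verify that $J \in C^1(H;\R)$ and that $B_\rho$ is a $C^1$ submanifold of $H$ so that the Lagrange rule applies. The quadratic part $\tfrac12 \|u\|_c^2$ from (\ref{norma-cilind-2}) is smooth by construction, since the norm (\ref{norma-cilind}) was built precisely to absorb the potential term $V(|y|)$. For the nonlinear part $u \mapsto \int_{\R^N} W(u)\,dx$, the growth bound (\ref{hyp2}) on $R'$, together with the decomposition $W(s)=\frac{\Omega}{2}s^2+R(s)$ and the continuous embedding $H \hookrightarrow H^1(\R^N) \hookrightarrow L^q(\R^N)$ for $2\le q\le 2N/(N-2)$, gives that the Nemytskii operator $u\mapsto W'(u)$ is continuous from $H$ into its dual, whence $J\in C^1(H;\R)$. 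The Lagrange multiplier rule then yields $\lambda \in \R$ such that, for every $\varphi \in H$,
\[
\int_{\R^N}\!\bigl(\nabla \bar u\cdot \nabla \varphi + V(|y|)\,\bar u\,\varphi + W'(\bar u)\,\varphi\bigr)\,dx = \lambda \int_{\R^N}\! \bar u\,\varphi\,dx .
\]

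To upgrade this identity from symmetric to general test functions I would invoke Palais' principle of symmetric criticality: the functional $J_\lambda := J - \tfrac{\lambda}{2}\|\cdot\|_{L^2}^2$ is invariant under the natural isometric action of $O(k)$ on $\tilde H^{1}(\R^k\times \R^{N-k})$ (it acts only on the first $k$ variables, and $V(|y|)$ is radial in $y$), and $H$ is exactly the fixed-point set of this action. Hence every critical point of $J_\lambda|_H$ is automatically a critical point of $J_\lambda$ on the whole $\tilde H^{1}(\R^k\times \R^{N-k})$, so $\bar u$ is a non-negative weak solution of (\ref{ellittica}) of $L^2$-norm $\rho$. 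The only delicate point I anticipate is the test-function issue: $\tilde H^{1}$ is defined as the closure of $C^\infty_0((\R^k\setminus\{0\})\times \R^{N-k})$, so one must check via the standard codimension-$\ge 2$ capacity argument (which uses $k\ge 2$) that the axis $\{y=0\}$ is removable, so that testing against $\tilde H^1$ functions really captures the weak formulation of (\ref{ellittica}) on $\R^N$.
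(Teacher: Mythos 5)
Your proposal is correct and follows essentially the same route as the paper, which states Theorem \ref{wsol-ell} as a direct corollary of Theorem \ref{mainth} together with the Lagrange multiplier rule (using that $J\in C^1(H)$ under (\ref{v-hyp1}) and (\ref{hyp2})) and the Palais principle of symmetric criticality. The test-function issue on the axis $\set{y=0}$ that you flag at the end is not needed for the weak formulation (\ref{form-sol-deb}) against $\tilde H^{1}$ required here; the paper treats that upgrade separately in Theorem \ref{sol-distrib}.
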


By Theorem \ref{wsol-ell}, a solution $u$ satisfies
\begin{equation} \label{form-sol-deb}
\int_{\R^N} \left( \nabla u \cdot \nabla v + V(|y|)\, u\, v+
W'(u)\, v -\lambda\, u\, v\, \right)dx = 0 \qquad \forall\, v \in
\tilde H^{1}(\R^{k}\times \R^{N-k})
\end{equation}
However, we now prove that the point of minimum $u$ satisfies
(\ref{form-sol-deb}) also for all $\phi \in
C^{\infty}_{0}(\R^{N})$, hence it is a solution of
(\ref{ellittica}) in the sense of distributions.

\begin{thm} \label{sol-distrib}
If (\ref{v-hyp1}) and (\ref{hyp2}) hold and $u\in H$ is a
non-negative weak solution of (\ref{ellittica}), then it is a
solution also in the sense of distributions, that is
\begin{equation} \label{form-sol-dist}
\int_{\R^N} \left( \nabla u \cdot \nabla \phi + V(|y|)\, u\, \phi+
W'(u)\, \phi -\lambda\, u\, \phi\, \right)dx = 0 \qquad \forall\,
\phi \in C^{\infty}_{0}(\R^{N})
\end{equation}
\end{thm}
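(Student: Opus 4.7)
The plan is to approximate an arbitrary $\phi \in C^\infty_0(\R^N)$ by test functions supported away from the singular axis $\{y=0\}$, apply (\ref{form-sol-deb}), and then pass to the limit. Since $\tilde H^1(\R^k \times \R^{N-k})$ is by definition the closure of $C^\infty_0((\R^k \setminus \{0\}) \times \R^{N-k})$, this is the natural device to transfer a weak identity on $\tilde H^1$ to a distributional identity on $C^\infty_0(\R^N)$.

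Concretely, I would fix a radial cutoff $\eta \in C^\infty(\R^k)$ with $\eta\equiv 0$ on $\{|y|\le 1\}$ and $\eta \equiv 1$ on $\{|y|\ge 2\}$, set $\eta_n(y) := \eta(ny)$, and test (\ref{form-sol-deb}) against $v_n := \eta_n \phi$, which lies in $C^\infty_0((\R^k \setminus \{0\}) \times \R^{N-k}) \subset \tilde H^1$. Expanding $\nabla(\eta_n \phi) = \eta_n \nabla \phi + \phi \nabla \eta_n$ splits the resulting identity into a bulk part and a cutoff error $\int \phi\, \nabla u \cdot \nabla \eta_n$. For the bulk, dominated convergence handles $\int \eta_n \nabla u \cdot \nabla \phi$, $\int W'(u)\eta_n \phi$ (using (\ref{hyp2}) and $u \in H^1 \hookrightarrow L^{q_1}\cap L^{q_2}$ to dominate $W'(u)\phi$ by an $L^1$ function), and $\int \lambda u\, \eta_n \phi$; the potential term $\int V(|y|) u\, \eta_n \phi$ I would handle by decomposing $\phi = \phi^+ - \phi^-$ and invoking monotone convergence, the finiteness of the limit following a posteriori from the convergence of the other three terms.

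The hard part will be showing that the cutoff error vanishes. Since $|\nabla \eta_n| \le C n$ is supported on the annular slab $A_n := \{1/n \le |y| \le 2/n\}$, with $|A_n \cap \mathrm{supp}(\phi)| \le C_\phi\, n^{-k}$, Cauchy--Schwarz gives
$$
\left|\int_{\R^N} \phi\, \nabla u \cdot \nabla \eta_n\, dx\right| \le C\, \|\phi\|_\infty\, n^{1-k/2}\, \left(\int_{A_n \cap \mathrm{supp}(\phi)} |\nabla u|^2\, dx\right)^{1/2}.
$$
Because $k\ge 2$, the prefactor $n^{1-k/2}$ stays bounded, and the $L^2$-norm of $\nabla u$ over the shrinking slab tends to $0$ by absolute continuity of the Lebesgue integral; hence the error vanishes in the limit. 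Collecting the four limits in the identity tested against $v_n$ then yields (\ref{form-sol-dist}).

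The whole argument hinges on the codimension $k\ge 2$ of the singular axis $\{y=0\}$, reflecting the general fact that sets of codimension at least two have vanishing $H^1$-capacity and can be crossed by weak solutions; in codimension one the prefactor $n^{1-k/2}=n^{1/2}$ would blow up and this direct cutoff approach would need to be replaced by a more refined argument.
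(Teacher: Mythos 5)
Your proposal is correct and follows essentially the same route as the paper: cut off near the axis $\set{y=0}$ with $\eta_n$, test (\ref{form-sol-deb}) against $\eta_n\phi$, kill the cross term $\int \phi\,\nabla u\cdot\nabla\eta_n$ via the Cauchy--Schwarz estimate with prefactor $n^{1-k/2}$ (bounded for $k\ge 2$) and the vanishing of $\int_{A_n}|\nabla u|^2$, and recover the potential term by monotone convergence on $\phi^{\pm}$ with finiteness of the limit deduced a posteriori from the identity. The paper's proof is the same in every essential step, including the codimension-$\ge 2$ observation.
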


\begin{proof}
We consider the sequence of $C^{\infty}$ non-negative functions
$$
0\le \eta_{n} \le 1, \qquad \eta_{n}(y,z) = \left\{
\begin{array}{ll}
1 & \text{if }\ |y|\ge \frac 2 n \\[0.2cm]
0 & \text{if }\ |y|\le \frac 1 n
\end{array} \right. \qquad \text{and}\ \ |\nabla \eta_{n}| \le K\, n
$$
for a positive constant $K$. Moreover we assume that $\eta_n$ is
non-decreasing along radii starting from $\set{y=0}$. Then for any
$\phi \in C^{\infty}_{0}(\R^{N})$ we have $\eta_{n} \phi \in
\tilde H^{1}$, hence we can choose $v= \eta_{n} \phi$ in
(\ref{form-sol-deb}). Let us assume $\phi \ge 0$. Otherwise we let
$\phi = \phi^{+} - \phi^{-}$ and show (\ref{form-sol-dist})
separately for $\phi^{+}$ and $\phi^{-}$.

Notice that since $u$, $\eta_{n}$ and $\phi$ are non-negative, the
sequence $\set{u\, \phi\, \eta_{n}}$ is non-decreasing and
non-negative, and it converges almost everywhere to $u\, \phi$.
Moreover, since $u\in H$ and $W$ satisfies (\ref{hyp2}), by
classical Sobolev estimates, we get $u\phi \in L^{1}(\R^{N})$ and
$|W'(u)|\phi \in L^{1}(\R^{N})$. Hence, since $\eta_{n}\le 1$, we
can apply Lebesgue dominated convergence theorem to obtain
$$
\int_{\R^N} W'(u)\, \phi \, \eta_{n} \ dx \to \int_{\R^N} W'(u)\, \phi \ dx
$$
$$
\int_{\R^N} u\, \phi \, \eta_{n} \ dx \to \int_{\R^N} u\, \phi \ dx
$$
We write $\nabla u \cdot \nabla (\phi \eta_{n}) = ( \nabla u \cdot
\nabla \eta_{n})\, \phi + (\nabla u \cdot \nabla \phi)\,
\eta_{n}$, and since $u \in H$ we have $|\nabla u \cdot \nabla
\phi| \in L^{1}(\R^{N})$. Hence as above
$$
\int_{\R^N} (\nabla u \cdot \nabla \phi)\, \eta_{n}\ dx \to
\int_{\R^N}  \nabla u \cdot \nabla \phi \ dx
$$
Moreover, letting $A_{n} = \set{|y| \le \frac 2 n} \cap \text{supp
$\phi$}$, it holds $m(A_{n}) \le const.\ \frac{1}{n^{k}}$, hence
$$
\int_{\R^N} |\nabla u \cdot \nabla \eta_{n}|\, \phi\ dx \le
const\, \frac{1}{n^{\frac k 2 -1}}\, \| \phi\|_{\infty}\, \left(
\int_{A_{n}} |\nabla u|^{2}\ dx \right)^{\frac 1 2} = o(1)
$$
for all $k\ge 2$ since $u \in H$ and $m(A_{n})\to 0$.

Writing (\ref{form-sol-deb}) with $v= \eta_{n} \phi$, using
previous results we get
\begin{equation} \label{grande-badiale}
\lim_{n\to \infty}\ \int_{\R^N} V(|y|)\, u\, \phi\, \eta_{n} \ dx
= \int_{\R^N} \left(  \nabla u \cdot \nabla \phi + W'(u)\, \phi -
\lambda\, u\, \phi \right)\ dx \in \R
\end{equation}
Since the sequence $\set{V(|y|)\, u\, \phi\, \eta_{n}}$ is
non-decreasing and non-negative, we get
$$
\int_{\R^N} V(|y|)\, u\, \phi\, \ dx = \lim_{n\to \infty}\
\int_{\R^N} V(|y|)\, u\, \phi\, \eta_{n} \ dx
$$
which together with (\ref{grande-badiale}) implies (\ref{form-sol-dist}).
\end{proof}

\section{Proof of Theorem \ref{mainth}} \label{proof}

We first prove an abstract result. Consider the
minimization problem
\begin{equation} \label{mini1}
I_{\rho}=\inf_{H \cap B_{\rho}} \ I(u)
\end{equation}
\begin{equation} \label{func1}
I(u):=\left( \frac 12 \| u \|^2_{c} + T(u) \right)
\end{equation}
where 
\begin{equation}\nonumber
\| u\|_{c}^2:= \int_{\R^{k}\times \OO}\ \left( |\nabla
u|^2+V(|y|)\, |u|^2 \right)dx
\end{equation}
with $V$ satisfying (\ref{v-hyp1})-(\ref{v-hyp3})
and $T$ is a real operator on $H$. Then

\begin{thm} \label{main-abs}
Let $T$ be differentiable on $H$ and $\set{u_n}\subset H\cap B_{\rho}$ be a minimizing sequence for (\ref{mini1}). Assume also that
\begin{equation}\label{(1)}
 u_n\rightharpoonup \bar u\neq 0 ;
\end{equation}
 \begin{equation}\label{(2)}
T(u_n-\bar u) + T(\bar u)=T(u_n)+ o(1);
\end{equation}
\begin{equation}\label{(4)}
T(\alpha_{n}(u_n-\bar u)) -T(u_n-\bar u) = o(1) \qquad \forall\, \set{\alpha_{n}}\subset \R\ \text{s.t.}\ \alpha_{n} \to 1;
\end{equation}
\begin{equation} \label{(6)}
<T'(u_{n}), u_{n}> = O(1)
\end{equation}
\begin{equation} \label{(5)}
<T'(u_{n}) - T'(u_{m}), u_{n} - u_{m}> = o(1) \qquad \text{as}\ n,m \to \infty
\end{equation}
\begin{equation}\label{(3)}
 \ T(u_{\theta}) \leq \theta^2 T(u)  \qquad \forall\,u\in H
\end{equation}
where
$$
u_{\theta}(x):=u\left(\frac{x}{\theta^{\frac{2}{N}}}\right), \theta>1
$$
Then $\bar u \in B_{\rho}$ and, up to a sub-sequence, $\| u_{n} - \bar u \|_{H} \to 0$ (see (\ref{norma-cilind})).
\end{thm}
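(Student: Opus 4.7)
\emph{Proof plan.} Let $v_n:=u_n-\bar u$, $\bar\rho:=\|\bar u\|_{L^2}$ and $\rho_n:=\|v_n\|_{L^2}$; by \eqref{(1)}, $\bar\rho>0$. Since $u_n\rightharpoonup\bar u$ weakly both in $L^2(\R^N)$ and in the Hilbert space $(H,\|\cdot\|_c)$, the parallelogram law gives $\|u_n\|_{L^2}^2=\bar\rho^2+\rho_n^2+o(1)$ and $\|u_n\|_c^2=\|\bar u\|_c^2+\|v_n\|_c^2+o(1)$; combined with \eqref{(2)} this yields the key splitting
\[
I(u_n)=I(\bar u)+I(v_n)+o(1),
\]
and $\|u_n\|_{L^2}=\rho$ forces $\rho_n^2\to\rho^2-\bar\rho^2$.

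The argument then rests on the dilation $u_\theta(x):=u(x/\theta^{2/N})$, for which $\|u_\theta\|_{L^2}^2=\theta^2\|u\|_{L^2}^2$, $\int|\nabla u_\theta|^2=\theta^{2-4/N}\|\nabla u\|_{L^2}^2$, and, using \eqref{v-hyp3} to bound $V(\theta^{2/N}|y|)\le V(|y|)$ when $\theta\ge 1$, $\int V|u_\theta|^2\le\theta^2\int V|u|^2$. Combined with \eqref{(3)}, this produces the scaling inequality
\[
I(u_\theta)\le\theta^2 I(u)-\tfrac{1}{2}\bigl(\theta^2-\theta^{2-4/N}\bigr)\|\nabla u\|_{L^2}^2,\qquad \theta>1,
\]
whose correction is strictly positive as soon as $u\ne 0$. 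To obtain $\bar u\in B_\rho$ I argue by contradiction: assume $\bar\rho<\rho$. Scaling $\bar u$ with $\theta=\rho/\bar\rho>1$ puts $\bar u_\theta\in B_\rho$, so $I_\rho\le I(\bar u_\theta)$ and the strict correction yields $I(\bar u)\ge(\bar\rho/\rho)^2 I_\rho+\delta$ for some fixed $\delta>0$. Scaling $v_n$ with $\theta_n=\rho/\rho_n$ (eventually $>1$; if one prefers a constant parameter one fixes $\theta_*=\rho/\sqrt{\rho^2-\bar\rho^2}$ and absorbs the $L^2$-mismatch by a scalar $\alpha_n\to 1$, which is admissible thanks to \eqref{(4)}) gives $I(v_n)\ge(\rho_n/\rho)^2 I_\rho+o(1)$. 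Summing and using $\bar\rho^2+\rho_n^2=\rho^2+o(1)$ yields $I(u_n)\ge I_\rho+\delta+o(1)$, contradicting $I(u_n)\to I_\rho$. Hence $\bar\rho=\rho$, so $\bar u\in B_\rho$, $I(\bar u)=I_\rho$, $u_n\to\bar u$ strongly in $L^2$, and $I(v_n)\to 0$.

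For strong convergence in $H$, the scaling inequality with $\theta_n=\rho/\rho_n\to\infty$ now gives $\liminf I(v_n)\ge\tfrac{1}{2}\liminf\|\nabla v_n\|_{L^2}^2$, which combined with $\limsup I(v_n)\le 0$ forces $\|\nabla v_n\|_{L^2}\to 0$. The remaining piece---the weighted term $\int V v_n^2$, and hence $\|v_n\|_c$---is the main obstacle, since $V$ may be strongly singular and $L^2$-convergence does not pass through $V$ for free. Here the Palais--Smale-type hypotheses \eqref{(5)} and \eqref{(6)} enter. By Ekeland's variational principle I may replace $\{u_n\}$ by an $H$-close sequence (without destroying \eqref{(1)}--\eqref{(6)}) that is an almost critical sequence of $I$ on $B_\rho$, i.e. $I'(u_n)-\lambda_n u_n\to 0$ in the dual of $H$; pairing with $u_n$ and using \eqref{(6)} together with the boundedness of $\|u_n\|_c$ makes $\{\lambda_n\}$ bounded. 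Pairing the relation against $u_n-u_m$ and exploiting $\langle u_n,u_m\rangle_{L^2}\to\rho^2$ yields $\langle I'(u_n)-I'(u_m),u_n-u_m\rangle=o(1)+o(\|u_n-u_m\|_H)$; expanding this as $\|u_n-u_m\|_c^2+\langle T'(u_n)-T'(u_m),u_n-u_m\rangle$ and absorbing the second summand by \eqref{(5)} gives $\|u_n-u_m\|_c\to 0$. Combined with the $L^2$-convergence this shows $\{u_n\}$ is Cauchy in $H$, so $u_n\to\bar u$ strongly in $H$ up to a subsequence.
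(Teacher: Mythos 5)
Your proof is correct and follows essentially the same route as the paper's: the splitting $I(u_n)=I(\bar u)+I(v_n)+o(1)$ via \eqref{(1)}, \eqref{(2)}, \eqref{(4)}, the dilation $u_\theta$ combined with \eqref{v-hyp3} and \eqref{(3)} to exclude $\|\bar u\|_{L^2}<\rho$, and Ekeland's principle together with \eqref{(6)} and \eqref{(5)} to obtain the Cauchy property in $\|\cdot\|_c$. The one (welcome) refinement is that you apply the scaling inequality directly to $\bar u$ and to $v_n$, producing the explicit gap $\delta=\tfrac12\left(1-\theta^{-4/N}\right)\|\nabla\bar u\|_{L^2}^2>0$, instead of first deducing the strict subadditivity $I_\rho<I_\mu+I_{\sqrt{\rho^2-\mu^2}}$ of the infima from $I_{\theta\rho}<\theta^2 I_\rho$ as the paper does; this sidesteps the sign considerations implicit in that deduction.
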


\begin{proof}
By (\ref{(1)}) we have $\| \bar u\|_{L^2} = \mu \in (0,\rho]$ and we assume that $\mu < \rho$, then we obtain a contradiction. Notice that again by (\ref{(1)}) we have
$$
\| u_n-\bar u\|_{L^2}^2+\|\bar u\|_{L^2}^2=\| u_n\|_{L^2}^2+o(1)
$$
hence
\begin{equation} \label{convergenza-norme}
\alpha_{n} := \frac{\sqrt{\rho^2-\mu^2}}{\| u_n-\bar u\|_{L^2}} \to 1
\end{equation}
By definition (\ref{mini1})
$$
\frac 12 \| u_{n} \|^2_{c} + T(u_{n}) =I_{\rho}+o(1)
$$
and by (\ref{(2)})
$$
\frac 12 \| u_n-\bar u\|_{c}^2 +\frac 12 \|\bar u\|_{c}^2+T(u_n-\bar u)+T(\bar u)=I_{\rho}+o(1)
$$
Hence, using the sequence $\alpha_{n}$ defined in (\ref{convergenza-norme}), by (\ref{(4)})
$$
\frac 12 \| \alpha_{n}(u_n-\bar u) \|_{c}^2 + T(\alpha_{n}(u_n-\bar u)) + \frac 12 \|\bar u\|_{c}^2+T(\bar u)=I_{\rho}+o(1)
$$

For $u_{\theta}$ we have $\|u_{\theta}\|_{L^2}=\theta\rho$ and 
\begin{eqnarray}
||u_{\theta}||_c^2= \left(\int_{\R^N} \frac{\theta^2}{\theta^{4/N}}
|\nabla u|^2 +  \theta^2\, V(|\theta y|)\, u^2 dx )\right)< \nonumber \\
<  \theta^2 \left(\int_{\R^N}|\nabla u|^2+
V(|y|)\, u^2 dx )\right)=\theta^2\|u(x)\|^2_c. \nonumber
\end{eqnarray}
By (\ref{(3)}) 
\begin{equation}\nonumber
I_{\theta \rho}= \inf \left(\frac 12 \|u_{\theta}|\|_{c}^2+T(u_{\theta}) \right) 
<  \inf \theta^2 \left(\frac 12 \|u|\|_{c}^2+T(u)\right)= \theta^2I_{\rho}
\end{equation}
and thus 
\begin{equation}\label{sub}
I_{\rho}<I_{\mu}+I_{\sqrt{\rho^2-\mu^2}}
\end{equation}
for any $\rho>0$
and $\theta>1$.\\
Now, notice that $\| \alpha_{n}(u_n-\bar u) \|_{L^{2}} = \sqrt{\rho^{2}-\mu^{2}}$, hence
$$
I_{_{\sqrt{\rho^{2}-\mu^{2}}}} + I_{\mu} \le \frac 12 \| \alpha_{n}(u_n-\bar u) \|_{c}^2 + T(\alpha_{n}(u_n-\bar u)) + \frac 12 \|\bar u\|_{c}^2+T(\bar u) = I_{\rho} + o(1)
$$
which is in contradiction with (\ref{sub}). This implies that $\| \bar u \|_{L^{2}} = \rho$.

>From $\bar u \in B_{\rho}$ it follows that $\| u_n-\bar u
\|_{L^{2}} = o(1)$, hence it remains to show that $\| u_{n} - \bar
u \|_{c} = o(1)$ up to a sub-sequence. By Ekeland principle, we
can assume that there exists a sequence $\set{\lambda_{n}}\subset
\R$ such that for the functional $I$ defined in (\ref{func1})
$$
<I'(u_{n}) - \lambda_{n}\, u_{n} , v > = o(1) \qquad \forall\, v \in H
$$
where $<\cdot,\cdot>$ denotes the duality pairing. It follows that
$$
<I'(u_{n}) - \lambda_{n}\, u_{n} , u_{n} > = o(1)
$$
since $\| u_{n} \|_{H}$ is bounded. From this and assumption (\ref{(6)}) it follows that the sequence $\set{\lambda_{n}}$ is bounded, hence up to a sub-sequence there exists $\lambda \in \R$ with $\lambda_{n} \to \lambda$.

We now have
$$
<I'(u_{n}) - I'(u_{m}) - \lambda_{n}u_{n} + \lambda_{m} u_{m}\ ,\ u_{n} - u_{m}>  = o(1) \qquad \text{as}\ n,m\to \infty
$$
hence, using $(\lambda_{n} - \lambda_{m}) <u_{m}, u_{n} - u_{m}> = o(1)$,
$$
\| u_{n} - u_{m} \|_{c}^{2} + <T'(u_{n}) - T'(u_{m}), u_{n} - u_{m}> - \lambda_{n} \| u_{n} - u_{m}\|_{L^{2}}^{2} = o(1)
$$
Since $\| u_{n} - u_{m}\|_{L^{2}} = o(1)$, $\lambda_{n} \to
\lambda$ and (\ref{(5)}) holds, we obtain that $\set{u_{n}}$ is a
Cauchy sequence in $H$. Hence $\|u_{n} - \bar u\|_{H} \to 0$.
\end{proof}

\begin{remark} \label{ipotesi-t}
Notice that (\ref{(2)})-(\ref{(5)}) in the previous theorem are assumed only for minimizing sequences. Moreover (\ref{(4)}) holds for example for uniformly continuous operators $T$.
\end{remark}

The proof of Theorem \ref{mainth} is now reduced to show that assumptions of Theorem \ref{main-abs} are satisfied for $J$ defined in (\ref{func}), with $T(u) = \int W(u)$. This is obtained by the following lemmas.

\begin{lem} \label{stimprio}
If (\ref{hyp1}) holds then $J_{\rho}> - \infty$, and any minimizing sequence $\set{u_n}\subset H\cap B_{\rho}$, i.e. $J(u_n)\rightarrow J_{\rho}$, is bounded in $H$.
\end{lem}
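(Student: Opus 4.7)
The plan is to expand $J(u)$ on $B_{\rho}$ using the decomposition $W(s) = \frac{\Omega}{2}s^2 + R(s)$, control the remainder $\int R(u)\,dx$ by a sub-critical $L^\gamma$ norm via (\ref{hyp1}), and then invoke a Gagliardo--Nirenberg interpolation to absorb this term in the gradient energy. The Gagliardo--Nirenberg inequality is legitimate because $H \subset H^{1}(\R^N)$, as noted just after the definition of $H$.

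First I would write, for $u \in H \cap B_\rho$,
\begin{equation*}
J(u) = \frac{1}{2}\|\nabla u\|_{L^2}^2 + \frac{1}{2}\int_{\R^N} V(|y|)\,u^2\,dx + \frac{\Omega}{2}\rho^2 + \int_{\R^N} R(u)\,dx.
\end{equation*}
The first two terms are non-negative by (\ref{v-hyp1}). From (\ref{hyp1}),
\begin{equation*}
\int_{\R^N} R(u)\,dx \ge -b_1 \rho^2 - b_2\,\|u\|_{L^\gamma}^{\gamma}.
\end{equation*}
Since $2 \le \gamma < 2 + 4/N$, the Gagliardo--Nirenberg inequality yields $\|u\|_{L^\gamma}^\gamma \le C\,\|\nabla u\|_{L^2}^{p}\,\|u\|_{L^2}^{\gamma - p}$ where $p := N(\gamma-2)/2 < 2$. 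Thus, using $\|u\|_{L^2} = \rho$,
\begin{equation*}
J(u) \ge \tfrac{1}{2}\|\nabla u\|_{L^2}^{2} - C_1\,\|\nabla u\|_{L^2}^{p} - C_2,
\end{equation*}
for constants $C_1, C_2$ depending only on $\rho, \Omega, b_1, b_2$. Since $p < 2$, the right hand side is bounded below in $\|\nabla u\|_{L^2}$, giving $J_\rho > -\infty$, and moreover tends to $+\infty$ as $\|\nabla u\|_{L^2} \to \infty$. Consequently, any minimizing sequence $\set{u_n}$ has $\|\nabla u_n\|_{L^2}$ uniformly bounded.

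It remains to bound the cylindrical weight $\int V(|y|) u_n^2\,dx$ and hence $\|u_n\|_H$. Rearranging the identity for $J(u_n)$,
\begin{equation*}
\tfrac{1}{2}\int_{\R^N} V(|y|)\,u_n^2\,dx = J(u_n) - \tfrac{1}{2}\|\nabla u_n\|_{L^2}^2 - \tfrac{\Omega}{2}\rho^2 - \int_{\R^N} R(u_n)\,dx.
\end{equation*}
The left hand side is non-negative, and we already control every term on the right except possibly $\int R(u_n)\,dx$ from above. Using (\ref{hyp2}), $R(0)=R'(0)=0$, and integration yields $|R(s)| \le c_1'|s|^{q_1} + c_2'|s|^{q_2}$ with $2\le q_1 \le q_2 < 2N/(N-2)$; by Gagliardo--Nirenberg (or the Sobolev embedding) both $\|u_n\|_{L^{q_1}}^{q_1}$ and $\|u_n\|_{L^{q_2}}^{q_2}$ are uniformly bounded in terms of $\|\nabla u_n\|_{L^2}$ and $\|u_n\|_{L^2}=\rho$. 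Hence $\int V(|y|) u_n^2\,dx$ is bounded, and combining with $\|u_n\|_{L^2}^2 = \rho^2$ and the bound on the gradient gives $\|u_n\|_H$ bounded, as required.

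The main obstacle is the interpolation step: one must verify that the exponent in (\ref{hyp1}) is strictly subcritical in the Gagliardo--Nirenberg sense (i.e.\ $p = N(\gamma-2)/2 < 2$), which is exactly why the hypothesis is stated with the sharp threshold $\gamma < 2 + 4/N$. All remaining estimates are standard applications of the Sobolev embedding afforded by $H \subset H^1(\R^N)$.
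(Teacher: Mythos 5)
Your proposal is correct and follows essentially the same route as the paper: expand $W=\frac{\Omega}{2}s^2+R$, bound $\int R(u)\,dx$ from below via (\ref{hyp1}), and absorb the resulting $\|u\|_{L^\gamma}^\gamma$ term into the gradient energy by Gagliardo--Nirenberg on $B_\rho$, the point being exactly that $N(\gamma-2)/2<2$ when $\gamma<2+4/N$. The one difference is that the paper keeps $\frac12\|u\|_c^2$ intact in the lower bound, so coercivity in the full cylindrical norm (hence boundedness of $\|u_n\|_H$) follows in a single step from (\ref{hyp1}) and (\ref{v-hyp1}) alone, whereas you first discard the $V$-term and then recover $\int V(|y|)u_n^2\,dx$ by a separate upper estimate on $\int R(u_n)\,dx$ using (\ref{hyp2}) --- an assumption not listed in the lemma (though it is a standing hypothesis of the paper); your detour works but is avoidable.
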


\begin{proof}
We apply the Sobolev inequality (see \cite{VoPa})
\begin{equation}\label{Sobolev}
\| u \|_{L^q} \leq b_{q} \| u \|_{L^2}^{1-\frac{N}{2}+\frac{N}{q}}\, \| \nabla
u \|_{L^2}^{\frac{N}{2}-\frac{N}{q}}
\end{equation}
that holds for $2 \leq q \leq 2^{*}$ when $N \geq 3$. From
(\ref{Sobolev}) it follows that for any $u \in B_{\rho}$
\begin{equation}\label{sopra}
\|u \|_{L^q}^q \leq b_{q,\rho} \|\nabla u\|_{L^2}^{\frac{qN}{2}-N}.
\end{equation}
Now, by (\ref{sopra}) and (\ref{hyp1}), for all $u\in H \cap B_{\rho}$
\begin{eqnarray}
J(u)&\geq& \int \left( \frac 12 |\nabla u|^2 +\frac 12 V(|y|) u^2 + \frac{\Omega}{2} u^{2} -b_1u^2-b_2u^{\gamma} \right)\, dx \nonumber\\
&\geq &\int \left( \frac 12 |\nabla u|^2 + \frac 12 V(|y|) u^2
\right) dx- b_2 b_{\gamma,\rho}\left(\int{|\nabla
u|^2}dx\right)^{\frac{\gamma N}{4}-\frac{N}{2}} + \left(
\frac{\Omega}{2} - c_1\right) \rho^2\nonumber.
\end{eqnarray}
Since $\gamma < 2+\frac{4}{N}$, it holds $\frac{\gamma N}{2}-N<2$, hence
have
$$
J(u)\geq \frac{1}{2}\| u \|_{c}^2 +o \left( \| \nabla u \|_{L^2}^2 \right).
$$
The proof follows easily.
\end{proof}

\begin{lem}\label{min0}
If (\ref{v-hyp2}) and (\ref{hyp3}) hold, then there exists
$\rho_0$ such that $J_{\rho}<0$ for all $\rho>\rho_0$.
\end{lem}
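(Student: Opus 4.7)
The plan is to exhibit, for every sufficiently large $\rho$, a test function $u \in H \cap B_\rho$ with $J(u) < 0$. The guiding picture is this: by (\ref{hyp3}) there is a level $s_0 > 0$ at which $W$ is strictly negative, so a cylindrically symmetric bump of height $s_0$ supported far from the axis $\{y=0\}$ contributes $W(s_0) < 0$ per unit volume to $\int W(u)$. Its gradient term is concentrated near the boundary of the support and so is of lower order in the relevant scaling, while the potential $V(|y|)$ can be made uniformly small on the support by pushing it far out, which is exactly what (\ref{v-hyp2}) allows.

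Concretely, I would first choose $s_0 > 0$ with $W(s_0) < 0$ and build a cylindrically symmetric cutoff $\chi \in C_0^\infty((\R^k \setminus \{0\}) \times \R^{N-k})$ with $0 \le \chi \le 1$, $\mathrm{supp}(\chi) \subset \{|y| \ge 1\}$, and $\int W(s_0 \chi)\,dx < 0$. The latter is arranged by setting $\chi \equiv 1$ on a large cylindrical region $A' = \{2 \le |y| \le T\} \times \{|z| \le L\}$ with a thin smooth transition down to $0$ inside a slightly larger set $A$: the bulk contributes $W(s_0)\,|A'| < 0$, the transition contributes at most $M\,|A \setminus A'|$ with $M := \max_{t \in [0,s_0]}|W(t)|$, and the ratio $|A'|/|A \setminus A'|$ can be made arbitrarily large by choosing $T,L$ large, so the bulk wins.

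Next, for $\lambda \ge 1$ I would set $u_\lambda(x) := s_0\,\chi(x/\lambda)$. A direct rescaling yields
\[
\|u_\lambda\|_{L^2}^2 = s_0^2 \|\chi\|_{L^2}^2\,\lambda^N, \quad \int |\nabla u_\lambda|^2\,dx = s_0^2 \|\nabla \chi\|_{L^2}^2\,\lambda^{N-2}, \quad \int W(u_\lambda)\,dx = \lambda^N \int W(s_0\chi)\,dx.
\]
For the potential term, $\mathrm{supp}(u_\lambda) \subset \{|y| \ge \lambda\}$, so by (\ref{v-hyp2}) for any $\varepsilon > 0$ there is $\lambda_\varepsilon$ with $V(s) < \varepsilon$ whenever $s \ge \lambda_\varepsilon$, hence for $\lambda \ge \lambda_\varepsilon$
\[
\int V(|y|)\,u_\lambda^2\,dx \le \varepsilon\,\|u_\lambda\|_{L^2}^2 = \varepsilon\,s_0^2\|\chi\|_{L^2}^2\,\lambda^N,
\]
i.e.\ $\int V(|y|)\,u_\lambda^2\,dx = o(\lambda^N)$. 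Combining, $J(u_\lambda) = \lambda^N\bigl(\int W(s_0\chi)\,dx + o(1)\bigr)$ as $\lambda\to\infty$, which is strictly negative for $\lambda$ large. Since $\lambda \mapsto \|u_\lambda\|_{L^2} = s_0\|\chi\|_{L^2}\,\lambda^{N/2}$ is a continuous bijection from $[1,\infty)$ onto $[s_0\|\chi\|_{L^2},\infty)$, for every $\rho$ larger than some $\rho_0$ I may pick $\lambda(\rho)$ with $\|u_{\lambda(\rho)}\|_{L^2} = \rho$, and then $J_\rho \le J(u_{\lambda(\rho)}) < 0$.

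The main technical hurdle is controlling the potential term: without any further regularity or decay rate on $V$, one has to exploit (\ref{v-hyp2}) through this translation-by-rescaling mechanism, so that the smallness of $V$ on $\{|y| \ge \lambda\}$ beats the $\lambda^N$ growth of $\|u_\lambda\|_{L^2}^2$ and the potential term stays genuinely subleading with respect to the negative volume contribution $\lambda^N \int W(s_0\chi)\,dx$.
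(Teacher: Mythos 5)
Your proof is correct, and it takes a genuinely different route from the paper's. The paper fixes a profile $f(|z|)$ of unit width in the $z$-variables and a plateau $v_n(|y|)$ on the annulus $R_n\le|y|\le 2R_n$, then sends only $R_n\to\infty$; you instead fix once and for all a single cylindrically symmetric profile $\chi$ with $\int W(s_0\chi)\,dx<0$ and dilate it isotropically in all $N$ coordinates. Your isotropic dilation buys two things. First, the kinetic term scales exactly as $\lambda^{N-2}=o(\lambda^{N})$, so it is automatically subleading; in the paper's anisotropic construction the $z$-derivative of the fixed cutoff $f$ is supported on a set of volume comparable to the bulk annulus, so $\int|\nabla u_n|^2$ is really $O(R_n^{k})$ rather than the claimed $O(R_n^{k-1})$, and one must instead argue (as you do at the level of $\chi$, by taking the plateau $A'$ large relative to the transition $A\setminus A'$) that the negative bulk contribution $W(s_0)|A'|$ beats both the transition cost of $W$ and the gradient cost --- a repair the paper leaves implicit. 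Second, since $\|u_\lambda\|_{L^2}=s_0\|\chi\|_{L^2}\lambda^{N/2}$ is a continuous increasing function of $\lambda$, you explicitly hit \emph{every} $\rho>\rho_0$, whereas the paper's sequence $R_n$ only directly yields a sequence of admissible norms. The one point common to both arguments, and worth making explicit, is that (\ref{v-hyp2}) is used exactly as you describe: the support is pushed to $\{|y|\ge\lambda\}$ so that $\int V(|y|)u_\lambda^2\le\varepsilon\|u_\lambda\|_{L^2}^2=o(\lambda^N)$, with no regularity or decay rate on $V$ required (and note that membership of $u_\lambda$ in $H$ is only needed, and only holds cheaply, for $\lambda$ large, which suffices).
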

\begin{proof}
Let $s_0$ satisfy $W(s_0)<0$ as in (\ref{hyp3}). We consider the
sequence of functions $u_n(x) = u_n(|y|,z) = f(|z|)\, v_n(|y|)$
with
$$
v_n(|y|) = \left\{
\begin{array}{ll}
s_0 (|y|-R_n+1) & \text{for $R_n-1 \le |y| \le R_n$} \\[0.1cm]
s_0 & \text{for $R_n \le |y| \le 2 R_n$} \\[0.1cm]
s_0 (2 R_n-|y|+1) & \text{for $2 R_n \le |y| \le 2 R_n+1$} \\[0.1cm]
0 & \text{for $|y| \ge 2 R_n+1$}
\end{array} \right.
$$
$$
f(|z|) = \left\{
\begin{array}{ll}
1 & \text{for $0\le |z| \le 1$} \\[0.1cm]
2-|z| & \text{for $1 \le |y| \le 2$} \\[0.1cm]
0 & \text{for $|y| \ge 2$}
\end{array}
\right.
$$
and assume $R_n \to \infty$. Then $u_n \in H$ and
$$
\int\, |\nabla u_n|^2\, dx = O(R_n^{k-1})
$$
$$
\int\, V(|y|)\, u_n^2\, dx = \left( \int_{R_n}^{2 R_n}\, V(r)\,
r^{k-1} \, s_0^2\, dr \right) \, \left( \int_{|z|\le 2}\, f(z) dz
\right) + o(R_n^{k}) = O(R_n^{k}\, V(R_n))
$$
$$
\int\, W(u_n)\, dx = \left( \int_{R_n}^{2 R_n}\, r^{k-1}\,
W(s_0)\, dr \right) \, \left( \int_{|z|\le 2}\, f(z) dz \right) +
o(R_n^{k-1}) = O(R_n^k)
$$
Since $W(s_0)<0$ and $V(R_n)\to 0$, it follows that $J(u_n)$ is
negative for $n$ large enough, and $\| u_n \|_{L^2}^2 = O(R_n^k)$.
\end{proof}

\begin{lem} \label{compactness}
Let $2\le k < N$ and $\OO$ a bounded open subset of $\R^{N-k}$.
Then the embedding $H(\R^k \times \OO) \hookrightarrow L^p(\R^N)$
is compact for all $p\in (2,2^{*})$. If $k=N$ then the embedding
$H(\R^N) \hookrightarrow L^p(\R^N)$ is compact for all $p\in
(2,2^{*})$.
\end{lem}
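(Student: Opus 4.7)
The plan is to combine local compactness from Rellich--Kondrachov with a Strauss-type radial decay estimate exploiting the cylindrical symmetry of elements of $H(\R^k\times\OO)$. Let $\{u_n\}$ be a bounded sequence in $H(\R^k\times\OO)$. Since $V\ge 0$, the norm in \eqref{norma-cilind} controls the ordinary $H^1(\R^N)$-norm, so $\{u_n\}$ is bounded in $H^1(\R^N)$ and we may extract a subsequence converging weakly to some cylindrically symmetric $u \in H^1(\R^N)$. Because $\OO$ is bounded, $\{|y|\le R\}\times\OO$ is a bounded Lipschitz domain for every $R>0$, so Rellich--Kondrachov gives $u_n\to u$ strongly in $L^p(\{|y|\le R\}\times\OO)$ for every $p\in(2,2^{*})$.

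The core step is uniform control of the tail on $\{|y|>R\}\times\OO$. For a.e.\ $z\in\OO$, the slice $u_n(\cdot,z)$ is a radial $H^1(\R^k)$-function, and since $k\ge 2$ the classical Strauss radial lemma yields
\begin{equation*}
|u_n(y,z)|^2 \le \frac{C_k}{|y|^{k-1}}\,\|u_n(\cdot,z)\|_{L^2(\R^k)}\,\|\nabla_y u_n(\cdot,z)\|_{L^2(\R^k)} \qquad \text{for }y\ne 0.
\end{equation*}
Writing $|u_n|^p=|u_n|^{p-2}\cdot|u_n|^2$, bounding the first factor by the Strauss estimate on $\{|y|>R\}$, and then integrating in $z$ with Fubini and Hölder (using the boundedness of $\OO$), one obtains a tail bound of the form
\begin{equation*}
\int_{\{|y|>R\}\times\OO}|u_n|^p\,dx \le C\, R^{-(k-1)(p-2)/2}\,\|u_n\|_{H^1(\R^N)}^{p},
\end{equation*}
whose exponent $(k-1)(p-2)/2$ is strictly positive since $k\ge 2$ and $p>2$; hence the tail tends to zero uniformly in $n$ as $R\to\infty$.

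A standard $\varepsilon/3$ argument combining local strong convergence with this uniform tail smallness (applied also to the weak limit $u$) then gives $u_n\to u$ in $L^p(\R^N)$, yielding the claimed compact embedding. The case $k=N$ is handled by the same scheme with the $z$-variable absent: the Strauss lemma applies directly in $\R^N$ and Rellich--Kondrachov on balls supplies the local piece.

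The main obstacle is the tail estimate: the Strauss bound is inherently fiber-wise in $z$, and reassembling the fiber norms into a single bound controlled by $\|u_n\|_{H^1(\R^N)}$ requires a careful Hölder argument in the $z$-variable, which is precisely where the hypotheses $k\ge 2$ (ensuring radial decay in $y$) and $\OO$ bounded (ruling out escape in $z$) enter decisively; dropping either assumption makes the embedding non-compact.
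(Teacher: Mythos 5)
The paper does not actually prove this lemma: its ``proof'' is a one-line citation to Esteban--Lions \cite{ESLI83} for $2\le k<N$ and to Strauss \cite{St77} for $k=N$. Your strategy (fiber-wise Strauss decay in $y$ plus Rellich--Kondrachov on $\{|y|\le R\}\times\OO$ plus an $\varepsilon/3$ argument) is exactly the strategy underlying those references, and your treatment of the case $k=N$ is complete and correct, since there the Strauss estimate $|u(x)|^2\le C|x|^{-(N-1)}\|u\|_{L^2}\|\nabla u\|_{L^2}$ is controlled directly by the full $H^1(\R^N)$-norm.

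For $k<N$, however, the step you yourself flag as the main obstacle contains a genuine gap: ``Fubini and H\"older (using the boundedness of $\OO$)'' does not reassemble the fiber-wise bounds into $C\,R^{-(k-1)(p-2)/2}\|u_n\|_{H^1(\R^N)}^p$. Writing $a(z)=\|u_n(\cdot,z)\|_{L^2(\R^k)}$ and $b(z)=\|\nabla_y u_n(\cdot,z)\|_{L^2(\R^k)}$, your Strauss bound yields
\begin{equation*}
\int_{\{|y|>R\}\times\OO}|u_n|^p\,dx \;\le\; \frac{C}{R^{(k-1)(p-2)/2}}\int_{\OO} a(z)^{2+\frac{p-2}{2}}\,b(z)^{\frac{p-2}{2}}\,dz ,
\end{equation*}
and the integrand is homogeneous of total degree $p>2$ in quantities that a priori you only control in $L^2(\OO)$ (namely $\int_\OO a^2\le\|u_n\|_{L^2}^2$ and $\int_\OO b^2\le\|\nabla u_n\|_{L^2}^2$); no choice of H\"older exponents closes, and boundedness of $\OO$ only lets you lower integrability exponents, not raise them. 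The missing ingredient is that $a$ \emph{does} gain integrability: by Cauchy--Schwarz (Minkowski) one has $|\nabla_z a(z)|\le\|\nabla_z u_n(\cdot,z)\|_{L^2(\R^k)}$, so $\|a\|_{H^1(\R^{N-k})}\le\|u_n\|_{H^1(\R^N)}$ and hence $a\in L^q$ for $q$ up to the Sobolev exponent of $\R^{N-k}$. One then applies H\"older putting \emph{all} the weight of $b$ on $L^2$ (the analogous gain fails for $b$, as it would require mixed second derivatives), i.e.\ pairing $b^{(p-2)/2}\in L^{4/(p-2)}$ against $a^{(p+2)/2}\in L^{4/(6-p)}$, which requires $a\in L^{2(p+2)/(6-p)}$; a short computation shows this exponent is admissible precisely because $k\ge 2$ and $p<2^*$. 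With that addition your tail bound is correct and the proof closes; without it (or without simply citing \cite{ESLI83} as the paper does), the argument as written does not go through.
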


\begin{proof}
It follows from the compactness results in \cite{ESLI83} for $2\le
k < N$ and in \cite{St77} for $k=N$.
\end{proof}

\begin{lem}\label{van}
Let $J_{\rho}<0$ and $u_n$ be a minimizing sequence for
(\ref{minicil}) under assumptions (\ref{hyp2}). Then, up to
translations in $\R^{N-k}$, we have $u_n \rightharpoonup \bar
u\neq 0$.
\end{lem}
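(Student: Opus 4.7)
The plan is a proof by contradiction organized as a cylindrical version of Lions' concentration-compactness. By Lemma \ref{stimprio} the minimizing sequence $\{u_n\}$ is bounded in $H$, so weak limits are available along subsequences. Suppose, toward a contradiction, that along every subsequence and every translation sequence $\{z_n\} \subset \R^{N-k}$, the translates $u_n(\cdot,\cdot - z_n)$ converge weakly to $0$ in $H$; the goal is to contradict $J(u_n) \to J_\rho < 0$.

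The first step is to upgrade this weak-vanishing hypothesis to strong $L^p$-vanishing on all of $\R^N$. I would pick $z_n^* \in \R^{N-k}$ approximately maximizing $\int_{\R^k \times B_1(z)} u_n^2\, dx$. By the contradiction hypothesis the translates $u_n(\cdot,\cdot-z_n^*)$ converge weakly to $0$ in $H(\R^k \times B_1(0))$, so the compact embedding $H(\R^k \times B_1(0)) \hookrightarrow L^2(\R^k \times B_1(0))$ from Lemma \ref{compactness} forces this maximum to tend to zero. Because each $u_n$ is cylindrically symmetric in $y$, any Euclidean ball $B_1(y_0,z_0)\subset \R^N$ sits inside the cylinder $\R^k \times B_1(z_0)$, so $\sup_{\xi \in \R^N} \int_{B_1(\xi)} u_n^2\, dx \to 0$ as well. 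The classical Lions vanishing lemma, applied to $\{u_n\}$ regarded as bounded in $H^1(\R^N)$ (automatic from boundedness in $H$ together with $V \ge 0$), then yields $u_n \to 0$ strongly in $L^p(\R^N)$ for every $p \in (2, \frac{2N}{N-2})$.

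To close the argument I would invoke (\ref{hyp2}). Integrating $R'$ from zero gives $|R(s)| \le C_1|s|^{q_1} + C_2|s|^{q_2}$; after choosing the decomposition $W = \frac{\Omega}{2}s^2 + R$ so that $R$ carries no genuine quadratic contribution at the origin (so $q_1 > 2$), the strong $L^p$-vanishing forces $\int R(u_n)\, dx \to 0$. Consequently
\[
J(u_n) = \frac{1}{2}\|u_n\|_c^2 + \frac{\Omega}{2}\rho^2 + \int R(u_n)\, dx \ \ge\ \frac{\Omega}{2}\rho^2 + o(1),
\]
so that $J_\rho \ge \frac{\Omega}{2}\rho^2$. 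The main obstacle is then to rule out this alternative, i.e.\ to upgrade the hypothesis $J_\rho < 0$ to the strict inequality $J_\rho < \frac{\Omega}{2}\rho^2$. I would achieve this by refining the test profile used in Lemma \ref{min0}: dilating it in the $z$-direction makes the $z$-kinetic contribution arbitrarily small (while (\ref{v-hyp3}) prevents the $V$-term from getting worse), producing a competitor whose value of $J$ lies strictly below $\frac{\Omega}{2}\rho^2$ once $\rho$ is large. When $\Omega \ge 0$ no refinement is needed, since the bound above already contradicts $J_\rho < 0 \le \frac{\Omega}{2}\rho^2$ directly.
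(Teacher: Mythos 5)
Your overall strategy is the contrapositive of the paper's: the paper directly extracts $\int u_n^q\,dx\ge\beta>0$ for some $q\in(2,2^{*})$ from $J_\rho<0$, then decomposes $\R^{N-k}$ into unit cubes and uses the Esteban--Lions embedding of Lemma \ref{compactness} to find a translate with nonzero weak limit; you instead assume vanishing of all translates and try to force $\int|u_n|^q\to0$. Two concrete problems arise. First, your opening step invokes a \emph{compact} embedding $H(\R^k\times B_1(0))\hookrightarrow L^2$, which Lemma \ref{compactness} does not provide: compactness there holds only for $p\in(2,2^{*})$, and the $L^2$ embedding is genuinely non-compact on the cylinder (take $w_n=n^{-(k-1)/2}\phi(|y|-n)\psi(z)$: bounded in $H$, weakly null, but $\|w_n\|_{L^2}$ bounded below). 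This is repairable --- run the argument with $\sup_z\int_{\R^k\times B_1(z)}|u_n|^q\,dx$ for $q\in(2,2^{*})$ and the $L^q$ form of the vanishing lemma, or simply sum over unit cubes in $z$ as the paper does --- but as written the step rests on a false compactness claim.

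The second gap is the endgame, and it is not repairable from the stated hypotheses. Your vanishing alternative gives $J_\rho\ge\frac{\Omega}{2}\rho^2$, which contradicts $J_\rho<0$ only if $\Omega\ge0$. For $\Omega<0$ you propose to build a competitor with $J(u)<\frac{\Omega}{2}\rho^2$, i.e.\ with $\frac12\|u\|_c^2+\int R(u)\,dx<0$; but (\ref{hyp3}) only guarantees $\frac{\Omega}{2}s_0^2+R(s_0)<0$, which for $\Omega<0$ is compatible with $R\ge0$ everywhere (e.g.\ $\Omega=-2$, $R(s)=|s|^{q}$ with $q$ slightly above $2$). In that case $\frac12\|u\|_c^2+\int R(u)\,dx\ge0$ for every $u$, so $J_\rho$ equals $\frac{\Omega}{2}\rho^2$ exactly (approached by spreading $u$ out, using (\ref{v-hyp2})--(\ref{v-hyp3}) to kill the potential term), and no dilation of the Lemma \ref{min0} profile can produce the strict inequality you need. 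So your proof closes only under an extra normalization such as $\Omega\ge0$ and $R(s)=o(s^2)$ near $0$. To be fair, you have put your finger on a point the paper glosses over: its one-line assertion that $J_\rho<0$ and (\ref{hyp2}) yield $\int u_n^q\,dx\ge\beta>0$ implicitly relies on the same normalization of the splitting $W=\frac{\Omega}{2}s^2+R$. But identifying the obstruction is not the same as removing it, and in the case $\Omega<0$ with $R\ge0$ your proposed fix provably fails, so the argument as submitted is incomplete.
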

\begin{proof}
Since $J_{\rho}<0$, by (\ref{hyp2}) there exists $q\in (2,2^{*})$
such that
$$
\int_{\R^n}u_n^q\, dx\geq \beta>0.
$$
Moreover, by Lemma \ref{stimprio}, $\| u_n \|_H$ is bounded and
there exists $\bar u \in H$ such that $u_n \rightharpoonup \bar
u$. It remains to prove that $\bar u \not\equiv 0$.

Now we introduce for every $j\equiv(j_{k+1},j_{k+2},...,j_N)\in
\Z^{N-k}$ the cube
$$Q_j\equiv [j_{k+1},j_{k+1}+1)\times [j_{k+2},j_{k+2}+1)\times...\times [j_{N},j_{N}+1) \in \R^{N-k}$$
Let $S_j\equiv \R^{k}\times Q_j$, we have
\begin{eqnarray}
&&0<\beta \leq \int_{\R^N} u_n^q\, dx=\sum_j \left( \int_{S_j} |u_n|^q\, dydz\right)= \nonumber \\
&&=\sum_j \left( \int_{S_j} |u_n|^{q-2}|u_n|^2dydz\right)\leq \sum_j \left(\int_{S_j} |u_n|^{q}dydz \right)^{\frac{q-2}{q}}\left(\int_{S_j} |u_n|^{q}dydz \right)^{\frac{2}{q}}\leq  \nonumber \\
&&\leq  \sup_j\left(\int_{S_j} |u_n|^{q}dydz \right)\left(\sum_j\int_{S_j} |u_n|^{q}dydz \right)^{\frac{2}{q}}\leq \nonumber \\
&&\leq const\  \sup_j\left(\int_{S_j} |u_n|^{q}dydz \right) \
||u_n||_{H}^{\frac 2 q} \nonumber
\end{eqnarray}
Hence there exists a sequence of cubes $Q_{j^n}$ such that
$$\int_{S_{j^n}}\, |u_n|^{q}dydz>const>0$$
It follows that the minimizing sequence $v_n(x):=u_n(x+j^n)$
satisfies
$$
\int_{\R^k\times Q_0}\, |v_n|^q\, dx > const >0
$$
hence, by Lemma \ref{compactness}, the weak limit $\bar u
\not\equiv 0$.
\end{proof}

\begin{lem} \label{brezis-lieb}
Let $(u_n) \in L^{q_1}(\R^N)\cap L^{q_2}(\R^N)$, with $1< q_1 \leq q_2< \infty$, and $\tilde T(u)=\int R(u)dx$ with $R:\R\rightarrow \R$ of class $C^1$.  If
\begin{itemize}
\item $(u_n)$ is bounded in $L^{q_1}(\R^n)\cap L^{q_2}(\R^n)$;
\item $u_n \rightarrow u$ almost everywhere;
\item $|R'(s)| \leq b_1|s|^{q_1-1}+b_2|s|^{q_2-1}$,  $1 < q_1 \leq q_2<\infty$
\end{itemize}
Then
$$
\tilde T(u_n-u)+ \tilde T (u)= \tilde T (u_n)+o(1).
$$
\end{lem}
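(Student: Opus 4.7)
The plan is to adapt the classical Brezis--Lieb argument to the double-power growth of $R'$. First I would set $v_n := u_n - u$ and note that by Fatou's lemma $u \in L^{q_1}(\R^N)\cap L^{q_2}(\R^N)$, so $v_n$ is uniformly bounded in $L^{q_1}\cap L^{q_2}$. Since $u_n \to u$ a.e., $v_n \to 0$ a.e.; using that $R(0)=0$ (which must be taken to hold for $\tilde T$ to be defined on $L^{q_1}\cap L^{q_2}$) the integrand $R(u_n) - R(v_n) - R(u)$ converges to $0$ pointwise almost everywhere. The goal is to upgrade this pointwise statement to an integrated one.

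The central ingredient is a uniform inequality of the form
\begin{equation*}
|R(a+b) - R(a) - R(b)| \le \varepsilon \bigl( |a|^{q_1} + |a|^{q_2} \bigr) + C_\varepsilon \bigl( |b|^{q_1} + |b|^{q_2} \bigr)
\end{equation*}
valid for every $\varepsilon > 0$, some $C_\varepsilon > 0$, and all $a,b\in\R$. To obtain it I would first integrate the hypothesis on $R'$ to get $|R(s)| \le C(|s|^{q_1}+|s|^{q_2})$. Then the fundamental theorem of calculus gives $|R(a+b) - R(a)| \le \int_0^1 |R'(a+tb)|\,|b|\, dt$; bounding $|R'(a+tb)|$ by the triangle inequality produces terms of the form $|a|^{q_j - 1}|b|$ and $|b|^{q_j}$, and Young's inequality splits $|a|^{q_j - 1}|b| \le \varepsilon|a|^{q_j} + C_\varepsilon|b|^{q_j}$. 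Combining these with the bound on $|R(b)|$ yields the displayed inequality.

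Applying it with $a = v_n$ and $b = u$, I would define
\begin{equation*}
g_n^\varepsilon := \max\bigl\{ \, |R(u_n) - R(v_n) - R(u)| - \varepsilon \bigl(|v_n|^{q_1}+|v_n|^{q_2}\bigr),\; 0 \, \bigr\},
\end{equation*}
so that $0 \le g_n^\varepsilon \le C_\varepsilon\bigl(|u|^{q_1}+|u|^{q_2}\bigr) \in L^1(\R^N)$ and $g_n^\varepsilon \to 0$ a.e. Lebesgue's dominated convergence gives $\int_{\R^N} g_n^\varepsilon \, dx \to 0$, and by construction together with the uniform $L^{q_1}\cap L^{q_2}$ bound on $v_n$ one has $|\tilde T(u_n) - \tilde T(v_n) - \tilde T(u)| \le \int_{\R^N} g_n^\varepsilon \, dx + C\varepsilon$. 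Letting $n \to \infty$ and then $\varepsilon \to 0$ finishes the proof. The main technical obstacle is the double-exponent bookkeeping in the pointwise inequality: every cross term from the mean-value expansion must be absorbed via Young so that only $\varepsilon$-small multiples of $|v_n|^{q_j}$ appear, with the unbounded $n$-dependence confined to factors of $v_n$ and the $C_\varepsilon$ constants multiplying only the $n$-independent, integrable quantity $|u|^{q_1}+|u|^{q_2}$.
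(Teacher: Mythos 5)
Your proposal is correct and follows essentially the same route as the paper's proof: the mean-value/FTC expansion of $R(u_n)-R(u_n-u)$, Young's inequality to absorb the cross terms into $\varepsilon(|u_n-u|^{q_1}+|u_n-u|^{q_2})$, dominated convergence on the remainder, and the final $\varepsilon\to 0$ limit. Your version is in fact slightly more careful than the paper's, since you take the positive part of the remainder before invoking dominated convergence and you note explicitly that $R(0)=0$ is needed for $\tilde T$ to be well defined.
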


\begin{proof}
First of all we can write
$$
\left| R(u_{n}) - R(u_{n}-u) \right| = \left| R'(u_{n}-u+ \theta u) \right|\, |u| \le
$$
$$
\le c_1|u_{n}-u|^{q_1-1} \, |u| + c_2 |u_{n}-u|^{q_2-1}\, |u| + c_{3} (|u|^{q_{1}}+ |u|^{q_{2}})
$$
for some $\theta \in (0,1)$. Moreover, applying for any fixed $\epsilon>0$ the Young inequality
$$
ab \leq \epsilon a^p+c(\epsilon)b^q   \qquad \text{with}\ a,b>0\ \text{and} \ \frac{1}{p}+\frac{1}{q}=1
$$
with coniugated exponents $\frac{q_1}{q_1-1}, q_1$ and $\frac{q_2}{q_2-1}, q_2$ we get
$$
|R(u_n)-R(u_n-u)-R(u)|\leq \epsilon c_{4} (|u_n-u|^{q_1}+|u_n-u|^{q_2})+c(\epsilon) c_{5}(|u|^{q_1}+|u|^{q_2})+|R(u)|
$$
Hence, setting
$$
f_n^{\epsilon}:=|R(u_n)-R(u_n-u)-R(u)|-\epsilon c_{4}(|u_n-u|^{q_1}+|u_n-u|^{q_2})
$$
the Lebesgue dominated convergence theorem implies
$$
\lim_{n\rightarrow \infty}\int f_n^{\epsilon}dx=0.
$$
The proof is finished by writing
$$
|\tilde T(u_n)-\tilde T(u_n-u)-\tilde T(u)|\leq \int |R(u_n)-R(u_n-u)-R(u)|dx \leq
$$
$$
\leq \int f_n^{\epsilon}dx+\epsilon c_{4} \left(\int( |u_n-u|^{q_1}+|u_n-u|^{q_2})dx \right) \le o(1) + \epsilon c_{4} K
$$
where $K:=sup_{n}|\int (|u_n-u|^{q_1}+|u_n-u|^{q_2}|)dx$.
\end{proof}

\begin{lem} \label{t-ass}
If (\ref{v-hyp2}) and (\ref{hyp2}) hold, then the operator $T(u) =
\int W(u)$ satisfies (\ref{(2)})-(\ref{(5)}) for any minimizing
sequence $\set{u_{n}} \subset H\cap B_{\rho}$, for $\rho$ large
enough.
\end{lem}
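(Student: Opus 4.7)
The idea is to decompose $T$ according to the splitting $W(s) = \frac{\Omega}{2}s^2 + R(s)$, so that $T(u) = \frac{\Omega}{2}\|u\|_{L^2}^2 + \tilde T(u)$ with $\tilde T(u) = \int R(u)\,dx$, and then verify (\ref{(2)}), (\ref{(4)}), (\ref{(6)}), (\ref{(5)}) in turn. Throughout I take $\rho > \rho_0$ from Lemma \ref{min0} so that $J_\rho < 0$; combined with Lemma \ref{van} this gives (after a translation in $\R^{N-k}$) $u_n \rightharpoonup \bar u \neq 0$ in $H$ and $u_n \to \bar u$ a.e., while Lemma \ref{stimprio} bounds $\|u_n\|_H$ and hence $\|u_n\|_{L^p}$ for every $p \in [2, 2^*]$ by Sobolev.

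For (\ref{(2)}) the quadratic part splits by standard weak convergence in $L^2$, and for $\tilde T$ I would invoke Lemma \ref{brezis-lieb}, whose hypotheses ($L^{q_1}\cap L^{q_2}$-boundedness, a.e.\ convergence, growth of $R'$) are all in place. For (\ref{(4)}), let $v_n = u_n - \bar u$. The quadratic part contributes $\frac{\Omega}{2}(\alpha_n^2 - 1)\|v_n\|_{L^2}^2 = o(1)$. For the nonlinear part the mean value theorem together with (\ref{hyp2}) gives the pointwise bound $|R(\alpha_n v_n) - R(v_n)| \le C|\alpha_n - 1|(|v_n|^{q_1} + |v_n|^{q_2})$, which integrates to $o(1)$ by the $L^{q_i}$-boundedness of $v_n$. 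For (\ref{(6)}), direct computation yields
\[
\langle T'(u_n), u_n\rangle = \Omega \rho^2 + \int R'(u_n)\, u_n\,dx,
\]
and by (\ref{hyp2}) and H\"older the integral is bounded by $c_1\|u_n\|_{L^{q_1}}^{q_1} + c_2\|u_n\|_{L^{q_2}}^{q_2} = O(1)$.

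The main obstacle is (\ref{(5)}). The key observation is that it is invoked inside the proof of Theorem \ref{main-abs} only \emph{after} one has established $\|\bar u\|_{L^2} = \rho$, at which point $u_n \rightharpoonup \bar u$ in $L^2$ combined with equal $L^2$-norms yields $u_n \to \bar u$ strongly in $L^2$. Interpolating this with the $L^{2^*}$-bound coming from Sobolev gives $u_n \to \bar u$ in $L^p$ for every $p \in [2, 2^*)$, and hence $\|u_n - u_m\|_{L^{q_i}} = o(1)$ for $i = 1, 2$. I would then split
\[
\langle T'(u_n) - T'(u_m), u_n - u_m\rangle = \Omega \|u_n - u_m\|_{L^2}^2 + \int [R'(u_n) - R'(u_m)](u_n - u_m)\,dx
\]
and bound the second integral, using (\ref{hyp2}) and H\"older with conjugate exponents $q_i/(q_i - 1)$ and $q_i$, by $C(\|u_n\|_{L^{q_i}}^{q_i - 1} + \|u_m\|_{L^{q_i}}^{q_i - 1})\, \|u_n - u_m\|_{L^{q_i}} = O(1)\cdot o(1)$, which gives the desired $o(1)$. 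The delicate point is therefore not an estimate but the logical scheduling: (\ref{(5)}) is verified at a stage of the abstract argument where the $L^2$-convergence of the minimizing sequence is already available.
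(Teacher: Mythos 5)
Your proposal is correct and follows essentially the same route as the paper: the same decomposition $T = \frac{\Omega}{2}\|\cdot\|_{L^2}^2 + \tilde T$, Lemma \ref{brezis-lieb} for (\ref{(2)}), the mean value theorem with (\ref{hyp2}) for (\ref{(4)}) and (\ref{(6)}), and for (\ref{(5)}) exactly the paper's resolution of the apparent circularity — namely that the first part of the proof of Theorem \ref{main-abs} already yields $\bar u \in B_\rho$ and hence strong $L^2$ convergence, which is then interpolated with the Sobolev bound to get $L^p$ convergence for $p\in(2,2^*)$ before applying H\"older.
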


\begin{proof}
By Lemma \ref{stimprio}, any minimizing sequence is bounded in the
$H$ norm. Hence $\set{u_n}$ is bounded in all $L^p$ norms for $p
\in [2,2^*]$ and there exists $\bar u \in H$ such that $u_n
\rightharpoonup \bar u$. Moreover, classical compact embeddings of
$H$ into $L^p$ when restricted to open bounded subsets of $\R^N$
imply almost everywhere convergence of $u_n$ to $\bar u$.

Writing $T(u) = \int W(s) = \int \frac{\Omega}{2}\, u^2 + \tilde
T(u)$, condition (\ref{(2)}) is satisfied for $\int
\frac{\Omega}{2}\, u^2$ by a standard argument, and for $\tilde
T(u)$ by (\ref{hyp2}) and Lemma \ref{brezis-lieb}.

Condition (\ref{(4)}) is immediate for $\int \frac{\Omega}{2}\,
u^2$. For $\tilde T(u)$ we write
$$
\left| \int\, \left( R(\alpha_n(u_n-\bar u)) - R(u_n-\bar u)
\right)\, dx \right| \le |\alpha_n-1|\ \int\, |R'((1+\theta)
(u_n-\bar u))|\, |u_n-\bar u|\, dx
$$
for some $\theta\in (0,1)$. Hence, by (\ref{hyp2}),
$$
|\tilde T(\alpha_n(u_n-\bar u)) - \tilde T(u_n-\bar u)| \le
const\, |\alpha_n-1|\ \max\set{\| u_n - \bar u\|_{L^q}^q, \| u_n -
\bar u\|_{L^p}^p }
$$
Condition (\ref{(6)}) follows using (\ref{hyp2}) as above, and by
boundedness of $\| u_n - \bar u\|_{L^p}$ for all $p\in [2,2^*]$.

It remains to prove (\ref{(5)}). By Lemmas \ref{min0} and
\ref{van}, we obtain $\bar u \not\equiv 0$ if the minimization
problem (\ref{minicil}) is studied in $H\cap B_\rho$ for $\rho$
large enough. Hence we can repeat the proof of Theorem
\ref{main-abs} to obtain that the weak limit $\bar u$ is in
$B_\rho$. From $\bar u \in B_{\rho}$ it follows that $\| u_n-\bar
u \|_{L^{2}} = o(1)$. This implies (\ref{(5)}) for the term $\int
\frac{\Omega}{2}\, u^2$.

We now prove this condition for $\tilde T(u)$. We recall that $\| u_n-\bar u
\|_{H} = O(1)$. Now, the following inequality for $v \in H$
$$
\int_{\R^{N}}\, |v|^{p} \le \left( \int_{\R^{N}}\, |v|^{2}
\right)^{\frac{p-\beta}{2}}\, \left( \int_{\R^{N}}\, |v|^{2^{*}}
\right)^{\frac{\beta}{2^{*}}}
$$
with $2^{*}= \frac{2N}{N-2}$, $p\in (2,2^{*})$ and $\beta= \frac N
2 (p-2) \in (0,p)$, and the classical Sobolev theorem imply
\begin{equation} \label{norma-p}
\| u_{n} - \bar u \|_{L^{p}}^{p} \le const\, \| u_n-\bar u
\|_{L^{2}}^{p-\beta}\, \| u_n-\bar u \|_{H}^{\beta} = o(1)
\end{equation}
for all $p\in (2,2^{*})$. The proof of (\ref{(5)}) is now finished
by writing
$$
\left| \int\, (R'(u_{n}) - R'(u_{m}))(u_{n}-u_{m})\, dx \right| \le \int\, (|R'(u_{n})| + |R'(u_{m})|)\, |u_{n}-u_{m}|\, dx
$$
Using (\ref{hyp2}) and
$$
\int\, |u_{n}|^{p-1}\, |u_{n}-u_{m}|\, dx \le \left( \int\, |u_{n}|^{p}\, dx \right)^{\frac 1 q}\ \left( \int\, |u_{n}-u_{m}|^{p}\, dx \right)^{\frac 1 p}
$$
which holds for all $p\in (2,2^{*})$, we get
$$
\left| \int\, (R'(u_{n}) - R'(u_{m}))(u_{n}-u_{m})\, dx \right| \le const\ \left( \| u_{n}-u_{m} \|_{L^{p}} + \| u_{n}-u_{m} \|_{L^{q}} \right)
$$
where $p,q$ are as in (\ref{hyp2}). Condition (\ref{(5)}) now follows from (\ref{norma-p}).
\end{proof}

\begin{lem} \label{convexity}
Let $T(u)=\int_{\R^N}W(u)dx$. For any $\theta>1$ and $u \in H$ we have
$ T(u_{\theta})= \theta^2T(u)$.
\end{lem}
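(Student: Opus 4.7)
The plan is to obtain the identity by a direct change of variables, exploiting the fact that the scaling $u_{\theta}(x) = u(x/\theta^{2/N})$ has been chosen precisely so that a substitution in the integral over $\R^{N}$ produces a Jacobian factor of $\theta^{2}$.

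First I would write
\[
T(u_{\theta}) = \int_{\R^{N}} W\!\left(u\!\left(\tfrac{x}{\theta^{2/N}}\right)\right) dx,
\]
and then perform the substitution $y = x/\theta^{2/N}$, so that $dx = \theta^{2}\, dy$. This immediately gives
\[
T(u_{\theta}) = \theta^{2} \int_{\R^{N}} W(u(y))\, dy = \theta^{2}\, T(u),
\]
which is the desired equality (and in particular shows that the inequality (\ref{(3)}) holds as an equality for this choice of $T$, so that the strict inequality needed in Theorem \ref{main-abs} comes entirely from the kinetic/potential part $\|u\|_{c}^{2}$, as exploited in the proof of that theorem).

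The only subtlety is to check that the integrals are well defined so that the change of variables is legitimate. Since $u \in H \subset H^{1}(\R^{N})$, the Sobolev embedding gives $u \in L^{p}(\R^{N})$ for every $p \in [2, 2^{*}]$. From $W(s) = \tfrac{\Omega}{2}s^{2} + R(s)$ together with (\ref{hyp2}) (and $R(0)=0$, which we may assume since $W$ is determined only up to an additive constant), one obtains a pointwise bound $|W(s)| \le C(s^{2} + |s|^{q_{1}} + |s|^{q_{2}})$ with $2 \le q_{1} \le q_{2} < 2^{*}$, so $W(u) \in L^{1}(\R^{N})$, and the same bound applied to $u_{\theta}$ shows $W(u_{\theta}) \in L^{1}(\R^{N})$. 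Hence the change of variables is fully justified and the computation above is rigorous; there is no real obstacle beyond this routine integrability check.
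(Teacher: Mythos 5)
Your change-of-variables argument is correct and is clearly the intended one: the paper actually states Lemma \ref{convexity} without any proof, and the substitution $y=x/\theta^{2/N}$ with Jacobian $\theta^2$, together with the routine check that $W(u)\in L^1(\R^N)$ via (\ref{hyp2}) and Sobolev embedding, is exactly what is needed. Nothing is missing.
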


\vskip 0.3cm

\begin{proof}[Proof of Theorem \ref{mainth}]
If (\ref{v-hyp1})-(\ref{v-hyp3}) and (\ref{hyp1})-(\ref{hyp3})
hold, by Lemmas \ref{stimprio}, \ref{min0}, \ref{van}, \ref{t-ass}
and \ref{convexity}, we can apply Theorem \ref{main-abs} to the
functional $J$ defined in (\ref{func}). Hence for any minimizing
sequence $\set{u_n} \subset H \cap B_\rho$ with $\rho$ large
enough, there exists $\bar u \in B_\rho$ such that $\| u_n - \bar
u \|_H \to 0$. Now, from the continuity of the functional $J$ it
follows that $J(\bar u) = J_\rho$.
\end{proof}

\section{Application to nonlinear hydrogen atom} \label{hydr-sec}
For the nonlinear hydrogen atom equation (\ref{pina2}), we prove the existence of solutions of the elliptic equation 
\begin{equation} \label{nls-static_hydr2}
-\triangle u + \left( \frac{\ell^2}{|y|^2}+\Omega - \frac{1}{|x|} \right)
u-u^{p-1}=\lambda u
\end{equation}
The solutions of (\ref{nls-static_hydr2}) are critical points of the functional
$$
G(u):=\int_{\R^N} \left( \frac 12 |\nabla u|^2 
+\frac 12 \frac{\ell^2}{|y|^2}u^2+\frac{\Omega}{2}u^2-\frac 12 \frac{u^2}{|x|}-\frac{u^p}{p}\right)dx$$
restricted to cylindrically symmetric functions and constrained to
the manifold
$$
B_{\rho} := \set{ \int_{\R^N} u^2\, dx= \rho^2}
$$
We may assume $\Omega >1$. Indeed if $\Omega \le 1$, we first look for solutions of the elliptic equation
$$
-\triangle u + \left( \frac{\ell^2}{|y|^2}+ 2 - \frac{1}{|x|} \right)
u-u^{p-1}=\lambda u
$$
Then given such a solution $u$, the function
$$
\psi(t,x) = u(x) e^{i(\ell \theta(y) - (\lambda -2 + \Omega) t)}
$$
is a solution of (\ref{pina2}).

We first notice that
\begin{lem} \label{e-norma}
For $\Omega >1$ and $\ell\not= 0$ it holds
$$
V(x) := \frac{\ell^2}{|y|^2}+\Omega-\frac{1}{|x|} \ge 0  
$$
for all $x\in \R^{N}$.
\end{lem}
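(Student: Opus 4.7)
The plan is to exploit the geometric fact that $|x| \geq |y|$ (since $x=(y,z)$ gives $|x|^2 = |y|^2 + |z|^2$) and then reduce the three-term expression to a one-variable quadratic in $1/|y|$ that can be handled by completing the square.

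First I would dispose of the trivial case $y=0$, where $\ell^2/|y|^2$ is interpreted as $+\infty$ and the inequality holds automatically. For $y\neq 0$, the inequality $|x| \geq |y|$ gives $1/|x| \leq 1/|y|$, so it suffices to prove
$$
\frac{\ell^2}{|y|^2} + \Omega \;\geq\; \frac{1}{|y|}.
$$
Setting $t = 1/|y| > 0$, this is a quadratic inequality $\ell^2 t^2 - t + \Omega \geq 0$, which I would establish by completing the square:
$$
\frac{\ell^2}{|y|^2} - \frac{1}{|y|} + \Omega \;=\; \left( \frac{|\ell|}{|y|} - \frac{1}{2|\ell|} \right)^{\!2} + \Omega - \frac{1}{4\ell^2}.
$$

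It then remains to check that $\Omega - \frac{1}{4\ell^2} \geq 0$. Since $\ell \in \Z$ with $\ell \neq 0$ forces $\ell^2 \geq 1$, we have $\frac{1}{4\ell^2} \leq \frac{1}{4} < 1 < \Omega$, so the displayed quantity is strictly positive and the lemma follows.

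There is no real obstacle here; the only point worth emphasizing is that the hypothesis $\Omega > 1$ is in fact stronger than needed (any $\Omega \geq 1/(4\ell^2)$ would suffice), and that the bound $|x|\geq|y|$ is precisely what allows the cylindrical potential $\ell^2/|y|^2$ to dominate the full Coulomb singularity $1/|x|$, not merely its cylindrical trace $1/|y|$.
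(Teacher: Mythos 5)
Your proof is correct. The paper actually states Lemma \ref{e-norma} without any proof at all (it is introduced with ``We first notice that''), so there is nothing to compare against; your argument --- using $|x|\ge |y|$ to reduce to the cylindrical variable and then completing the square in $1/|y|$, with $\ell\in\Z\setminus\{0\}$ giving $\ell^2\ge 1$ and hence $\Omega-\tfrac{1}{4\ell^2}>0$ --- is the natural one the authors evidently had in mind, and your observation that $\Omega\ge \tfrac{1}{4\ell^2}$ would already suffice is a correct sharpening of the hypothesis.
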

Hence we can use the setting introduced in Section \ref{problema-ellittico} with $k=2$. We introduce $\tilde H^{1}(\R^{2} \times \OO)$ as the closure of $C^{\infty}_{0}((\R^{2}\setminus \set{0}) \times \OO)$ with respect to the norm
\begin{equation} \label{norma-cilind2}
\| u\|^2:=\int_{\R^{2}\times \OO}\ \left( |\nabla
u|^2+ V(x)u^2 \right) dx
\end{equation}
The subspace of $\tilde H^{1}(\R^{2}\times \OO) $ of cylindrically
symmetric functions will be denoted by $H(\R^{2}\times \OO)$, and
simply by $H$ when $\OO= \R^{N-2}$, hence
$$
H := \set{u\in \tilde H^{1}(\R^{2}\times \R^{N-2})\ :\ u= u(|y|,z)}
$$
We now restrict the action of the functional $G$ to $H$ and define
\begin{equation} \label{inf-g}
G_{\rho}=\inf_{H\cap B_{\rho}} \ G(u)
\end{equation}
We prove the following theorem

\begin{thm}\label{main2}
Let $N\geq 3$ and $2<p<2+\frac{4}{N}$. Then for $\rho$ big enough the infimum (\ref{inf-g}) of $G$ is achieved. Hence there are radially symmetric solutions (in the sense of distributions) of (\ref{nls-static_hydr2}) with prescribed $L^{2}$ norm large enough.
\end{thm}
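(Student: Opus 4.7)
The plan is to apply (a mild adaptation of) the abstract Theorem \ref{main-abs} to $G$ in the setting of Section \ref{hydr-sec}. The natural decomposition is $G(u) = \frac{1}{2}\|u\|^2 + T(u)$ with $\|u\|^2 = \int(|\nabla u|^2 + V(x)u^2)\,dx$ and $T(u) = -\frac{1}{p}\int u^p\,dx$, where $V(x)=\frac{\ell^2}{|y|^2}+\Omega-\frac{1}{|x|}\ge 0$ by Lemma \ref{e-norma}. So $\|\cdot\|$ is a bona fide norm on the Hilbert space $\tilde H^1(\R^2\times\OO)$ introduced in (\ref{norma-cilind2}).

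First I would repeat the scheme of Section \ref{proof} for this $G$: (i) lower boundedness $G_\rho>-\infty$ and boundedness of minimizing sequences in $\|\cdot\|$ via the Gagliardo-Nirenberg inequality (\ref{Sobolev}) and the subcriticality $p<2+\frac{4}{N}$ (analog of Lemma \ref{stimprio}); (ii) $G_\rho<0$ for $\rho$ large via cylindrical trial functions supported far from $\{y=0\}$, exploiting $V(x)\to \Omega$ at infinity while $-\frac{1}{p}u^p$ dominates for $u\equiv s_0$ large (analog of Lemma \ref{min0}); (iii) existence of a nontrivial weak limit up to $z$-translations using the cube decomposition argument together with the compact embedding of Lemma \ref{compactness}. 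The Brezis-Lieb splitting (\ref{(2)}), uniform continuity (\ref{(4)}), the bound (\ref{(6)}) and the Cauchy condition (\ref{(5)}) for $T(u)=-\frac{1}{p}\int u^p$ are then direct consequences of Lemma \ref{brezis-lieb} and the arguments of Lemma \ref{t-ass}, since $p\in(2,2^*)$.

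The genuinely new point, and the main obstacle, is the scaling condition (\ref{(3)}): because $V(x)$ depends on the full variable $x$ (not only on $|y|$) and contains the attractive Coulomb term $-1/|x|$, the spatial dilation $u_\theta(x)=u(x/\theta^{2/N})$ used in Theorem \ref{main-abs} does not produce the clean inequality $\|u_\theta\|^2 < \theta^2\|u\|^2$: the Coulomb piece scales as $\theta^{2-2/N}$ and works against us. I would replace that scaling by the multiplicative one $u\mapsto \theta u$, which also maps $B_\rho$ onto $B_{\theta\rho}$. Under this scaling $\|\theta u\|^2 = \theta^2\|u\|^2$ holds exactly and $T(\theta u) = \theta^p T(u)$, so
$$G(\theta u) - \theta^2 G(u) = -\frac{\theta^p - \theta^2}{p}\int u^p\,dx.$$
For $\theta>1$ and $p>2$ this is strictly negative whenever $\int u^p>0$. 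Applied to a minimizing sequence $\{u_n\}\subset B_\rho$ (for which $\int u_n^p \ge c>0$ follows from $G_\rho<0$ and the bound on $\|u_n\|$), one gets $G_{\theta\rho} \le G(\theta u_n) \le \theta^2 G(u_n) - \frac{\theta^p - \theta^2}{p}c$, and letting $n\to\infty$ yields the strict inequality $G_{\theta\rho}<\theta^2 G_\rho$ for every $\theta>1$. This is exactly the step used in Theorem \ref{main-abs} to derive the strict subadditivity
$$G_\rho < G_\mu + G_{\sqrt{\rho^2-\mu^2}},\qquad 0<\mu<\rho,$$
by combining the scaling inequalities obtained with $\theta=\rho/\mu$ and $\theta=\rho/\sqrt{\rho^2-\mu^2}$.

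With this replacement in place, the rest of the proof of Theorem \ref{main-abs} goes through verbatim: subadditivity together with the Brezis-Lieb splitting of $G(u_n)$ forces the weak limit $\bar u$ to satisfy $\|\bar u\|_{L^2}=\rho$, and then the Ekeland/Lagrange-multiplier argument, combined with (\ref{(5)}) and $\|u_n-\bar u\|_{L^2}=o(1)$, yields strong convergence $\|u_n-\bar u\|\to 0$. Continuity of $G$ then gives $G(\bar u)=G_\rho$; the distributional character of the resulting equation (\ref{nls-static_hydr2}) is obtained by repeating the cut-off argument of Theorem \ref{sol-distrib}, which only uses $V(|y|)\ge 0$ near $\{y=0\}$ and the $L^1_{\mathrm{loc}}$ integrability of $\frac{u^2}{|x|}$ and $u^{p-1}\phi$, both of which hold thanks to Hardy's inequality and $p<2^*$.
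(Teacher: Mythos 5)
Your proposal is correct and follows essentially the same route as the paper: the multiplicative scaling $u\mapsto\theta u$ that you introduce to replace the dilation in condition (\ref{(3)}) is precisely what the paper uses in Lemma \ref{convexity2}, and your steps (i)--(iii) together with the verification of (\ref{(2)})--(\ref{(5)}) coincide with Lemmas \ref{stimprio2}, \ref{neg2}, \ref{nonvan2} and the appeal to Lemmas \ref{brezis-lieb} and \ref{t-ass}. If anything, your explicit observation that $\int u_n^p\ge c>0$ along minimizing sequences is the detail that justifies the strict inequality $G_{\theta\rho}<\theta^2 G_\rho$, which the paper's Lemma \ref{convexity2} states a bit tersely.
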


We now show how to obtain the existence of the point of minimum. That this implies the existence of weak solutions of (\ref{nls-static_hydr2}) is immediate. That these solutions are also solutions in the sense of distributions follows as in Theorem \ref{sol-distrib}.

The proof follows as in Theorem \ref{main-abs} and the continuity of $G$ in $H$. Hence we need to show that $G$ satisfies assumptions  (\ref{(1)})-(\ref{(5)}).
Moreover assumption (\ref{(3)}) is replaced by

\begin{lem} \label{convexity2}
For any $\mu \in (0,\rho)$ it holds
$G_{\rho}<G_{\mu}+G_{_{\sqrt{\rho^2-\mu^2}}}$.
\end{lem}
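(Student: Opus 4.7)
The spatial dilation $u(x/\theta^{2/N})$ that drove the subadditivity (\ref{sub}) in Theorem \ref{main-abs} is unavailable here, because the three summands $\ell^2/|y|^2$, $\Omega$, and $-1/|x|$ in $V$ scale with three different exponents and $V$ is not monotone along rays. My plan is therefore to exploit the \emph{amplitude} scaling $u\mapsto\sigma u$, which sends $H\cap B_\rho$ into $H\cap B_{\sigma\rho}$ and, by a one-line computation, satisfies the identity
\[
G(\sigma u)=\sigma^2 G(u) - \frac{\sigma^p-\sigma^2}{p}\int_{\R^N} u^p\,dx .
\]
The super-quadratic nonlinearity ($p>2$) is what will drive the argument.

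First I would verify that $G_\rho<0$ for $\rho$ large enough: for any fixed nonzero $\phi\in C_0^\infty((\R^2\setminus\{0\})\times\R^{N-2})$, the map $\lambda\mapsto G(\lambda\phi)$ tends to $-\infty$ while $\|\lambda\phi\|_{L^2}\to\infty$ as $\lambda\to\infty$. Next, Lemma \ref{e-norma} gives $V\ge 0$, hence $G(u)\ge-\tfrac{1}{p}\int u^p\,dx$, i.e.\ $\int u^p\,dx\ge -p\,G(u)$. Substituting this bound (strictly positive whenever $G(u)<0$) into the displayed identity yields $G(\sigma u)\le \sigma^p G(u)$. Applying this along a minimizing sequence for $G_\rho$ (eventually with $G(u_n)<0$) and passing to the limit gives the key scaling inequality
\[
G_{\sigma\rho}\le \sigma^p G_\rho <\sigma^2 G_\rho\qquad\text{for all }\sigma>1,\text{ whenever }G_\rho<0 .
\]
In particular $\rho\mapsto G_\rho/\rho^2$ is strictly decreasing on the range on which it is negative.

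To finish, I would set $\nu=\sqrt{\rho^2-\mu^2}$ and assume without loss of generality $\mu\ge\nu$, taking $\sigma=\rho/\mu>1$. Writing $(\rho/\mu)^2 G_\mu = G_\mu + (\nu^2/\mu^2) G_\mu$, the scaling inequality yields $G_\rho<G_\mu+(\nu^2/\mu^2)G_\mu$ whenever $G_\mu<0$. If moreover $G_\nu<0$, applying the same scaling at base $\nu$ gives $G_\mu/\mu^2\le G_\nu/\nu^2$, hence $(\nu^2/\mu^2)G_\mu\le G_\nu$, and the lemma follows. If instead $G_\nu\ge 0$ and $G_\mu<0$, then $(\rho/\mu)^2 G_\mu<G_\mu\le G_\mu+G_\nu$ gives $G_\rho<G_\mu+G_\nu$; while if $G_\mu,G_\nu\ge 0$, then $G_\mu+G_\nu\ge 0>G_\rho$ trivially. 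The place I would expect to need the most care is this sign bookkeeping, since the strict scaling monotonicity is only available at bases where $G$ is negative, forcing the cases of small $\mu$ or $\nu$ to be treated separately; the structural ingredient that makes everything work is the positivity $V\ge 0$ from Lemma \ref{e-norma}, which is precisely why the hypothesis $\Omega>1$ was imposed.
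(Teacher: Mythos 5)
Your proof is correct, and its core mechanism is the same as the paper's: the amplitude scaling $u\mapsto\theta u$, which maps $B_\rho$ to $B_{\theta\rho}$ and exploits $p>2$ to beat the quadratic scaling of $\tfrac12\|u\|^2$. The paper's own proof simply writes $G(\theta u)=\tfrac{\theta^2}{2}\|u\|^2-\tfrac{\theta^p}{p}\int|u|^p<\theta^2 G(u)$ pointwise and then takes infima to conclude $G_{\theta\rho}<\theta^2G_\rho$, leaving the deduction of $G_\rho<G_\mu+G_{\sqrt{\rho^2-\mu^2}}$ to the reader "as in Theorem \ref{main-abs}". Strictly speaking this has two soft spots that your argument repairs: a pointwise strict inequality does not in general survive passage to an infimum, and the standard deduction of strict subadditivity from $G_{\theta s}<\theta^2G_s$ needs that strict inequality at the bases $\mu$ and $\nu=\sqrt{\rho^2-\mu^2}$ as well, where $G$ may be nonnegative. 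Your use of $V\ge 0$ (Lemma \ref{e-norma}) to get $\int|u|^p\ge -pG(u)$ and hence $G(\sigma u)\le\sigma^pG(u)$ turns the pointwise gap into a quantitative one that does pass to the infimum whenever $G$ is negative there, and your sign bookkeeping handles the remaining cases (with the one cosmetic remark that in the mixed-sign case you should relabel so that the negative value is the one you rescale from, since that case need not respect your normalization $\mu\ge\nu$; the argument $G_\rho<(\rho/\mu)^2G_\mu<G_\mu\le G_\mu+G_\nu$ does not actually use $\mu\ge\nu$, so this is harmless). In short: same route as the paper, executed with more care; what your version buys is a genuinely complete proof of the strict inequality, at the cost of the case analysis the paper suppresses.
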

\begin{proof}
It is sufficient to prove that $G_{\theta \rho}< \theta^2 G_{\rho}$ for all $\theta>1$. For $u\in H \cap B_{\rho}$ we set $u_\theta(x) \equiv \theta u(x)$. Clearly $\|u_\theta \|_{2}^2=\theta^2\rho^2$. Using the norm (\ref{norma-cilind2}) we get
$$
G_{\theta\rho} \le \inf \left( \theta^2 \| u\|^2 - \frac{\theta^p}{p} \int_{\R^3} |u|^p dx \right) <  \inf \theta^2 \left( \| u\|^2 - \frac 1p  \int_{\R^3} |u|^pdx \right) = \theta^2 G_{ \rho}
$$
 \end{proof}
Following the ideas of Lemmas \ref{stimprio}, \ref{min0} and \ref{van}, we obtain

\begin{lem} \label{stimprio2}
$G_{\rho}> - \infty$, and any minimizing sequence $\set{u_n}\subset H\cap B_{\rho}$, i.e. $G(u_n)\rightarrow G_{\rho}$, is bounded in $H$.
\end{lem}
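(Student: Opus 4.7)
The plan is to repeat the argument of Lemma \ref{stimprio}, the only new feature being the attractive Coulomb term $-\frac{u^{2}}{|x|}$. This obstacle is immediately neutralized by Lemma \ref{e-norma}: since $\Omega>1$ and $\ell\neq 0$, the effective potential $V(x)=\frac{\ell^{2}}{|y|^{2}}+\Omega-\frac{1}{|x|}$ is pointwise non-negative, so the full quadratic part of $G$ is captured by $\frac{1}{2}\|u\|^{2}$ in the norm (\ref{norma-cilind2}), with no negative contribution left to absorb.

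Concretely, first I would use non-negativity of $V$ to write
$$
G(u)\ \ge\ \tfrac{1}{2}\|u\|^{2}\ -\ \tfrac{1}{p}\int_{\R^{N}}|u|^{p}\,dx\qquad\forall\,u\in H.
$$
For $u\in H\cap B_{\rho}$ I would then apply the Sobolev/Gagliardo--Nirenberg inequality (\ref{Sobolev}), exactly as in (\ref{sopra}), to obtain
$$
\|u\|_{L^{p}}^{p}\ \le\ b_{p,\rho}\,\|\nabla u\|_{L^{2}}^{\frac{pN}{2}-N}.
$$
The assumption $2<p<2+\frac{4}{N}$ forces $\frac{pN}{2}-N<2$, and $\|\nabla u\|_{L^{2}}\le\|u\|$ is built into the norm. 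Combining these gives
$$
G(u)\ \ge\ \tfrac{1}{2}\|u\|^{2}\ -\ C_{p,\rho}\,\|u\|^{\frac{pN}{2}-N},
$$
with a strictly subquadratic negative term.

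Both conclusions of the lemma follow at once from this single inequality: the right-hand side, viewed as a function of $t=\|u\|\ge 0$, is bounded below on $[0,\infty)$ and tends to $+\infty$ as $t\to\infty$. Hence $G_{\rho}>-\infty$, and any sequence $\{u_{n}\}\subset H\cap B_{\rho}$ with $G(u_{n})\to G_{\rho}$ is forced to be bounded in $H$. I do not expect any real obstacle once Lemma \ref{e-norma} is invoked: after that step this is the standard mass-subcritical energy estimate, with the single nonlinearity $-u^{p}/p$ playing the role of $R(s)$ in Lemma \ref{stimprio}.
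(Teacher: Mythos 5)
Your proof is correct and follows exactly the route the paper intends: the paper gives no separate proof of Lemma \ref{stimprio2}, stating only that it follows the ideas of Lemma \ref{stimprio}, and your argument is precisely that adaptation, with Lemma \ref{e-norma} absorbing the Coulomb term into the non-negative potential $V(x)$ of the norm (\ref{norma-cilind2}) and the single nonlinearity $-u^{p}/p$ handled by the Gagliardo--Nirenberg estimate (\ref{Sobolev}) with subquadratic exponent $\frac{pN}{2}-N<2$. No gaps.
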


\begin{lem}\label{neg2}
There exists $\rho_0$ such that $G_{\rho}<0$ for all $\rho>\rho_0$.
\end{lem}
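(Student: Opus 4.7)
The plan is to exhibit, for each sufficiently large $\rho$, an explicit element of $H\cap B_{\rho}$ on which $G$ takes a negative value. The construction of Lemma \ref{min0} does not apply directly here: the effective potential $V(x)=\frac{\ell^{2}}{|y|^{2}}+\Omega-\frac{1}{|x|}$ tends to $\Omega>0$ rather than to $0$ at infinity, so translating a plateau-shaped profile far out in $z$ cannot reduce the quadratic part of $G$ below a fixed positive multiple of $\rho^{2}$. I will exploit instead the superquadratic character of the nonlinearity $-u^{p}/p$, with $p>2$, by scaling amplitudes rather than supports.

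Concretely, fix a nonzero cylindrically symmetric $\phi\in C^{\infty}_{0}((\R^{2}\setminus\set{0})\times \R^{N-2})$ with $\|\phi\|_{L^{2}}=1$; such a $\phi$ exists because $\set{y=0}$ has codimension $2$ in $\R^{N}$ with $N\ge 3$. Since $V$ is locally bounded off $\set{y=0}$ (the Coulomb singularity at $x=0$ lies inside $\set{y=0}$), the norm $\|\phi\|$ defined by (\ref{norma-cilind2}) is finite, and of course $\|\phi\|_{L^{p}}>0$. For every $\rho>0$ one has $\rho\phi\in H\cap B_{\rho}$, and homogeneity of each term in $G$ yields
\begin{equation*}
G(\rho\phi)=\frac{\rho^{2}}{2}\,\|\phi\|^{2}-\frac{\rho^{p}}{p}\,\|\phi\|_{L^{p}}^{p}.
\end{equation*}
Since $p>2$ and both $\|\phi\|^{2}$ and $\|\phi\|_{L^{p}}^{p}$ are fixed positive constants, the right-hand side becomes negative once $\rho$ exceeds some threshold $\rho_{0}=\rho_{0}(\phi)$, giving $G_{\rho}\le G(\rho\phi)<0$ for every $\rho>\rho_{0}$.

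I do not anticipate any serious obstacle in this step. The attractive Coulomb term $-1/|x|$ has already been absorbed into the nonnegative potential $V$ by Lemma \ref{e-norma}, so it contributes nothing new to the quadratic estimate, and the singularity $\ell^{2}/|y|^{2}$ is dealt with simply by choosing $\phi$ compactly supported away from $\set{y=0}$. The whole argument reduces to the standard amplitude-rescaling identity, which works uniformly for any $H$-valued functional with a purely superquadratic local nonlinearity.
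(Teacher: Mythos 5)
Your proof is correct, but it follows a genuinely different route from the paper's. The paper disposes of this lemma in one line by reducing it to Lemma \ref{min0}: it keeps the constant $\Omega$ inside the nonlinear term (consistently with the decomposition $W(s)=\frac{\Omega}{2}s^{2}+R(s)$ of Section \ref{problema-ellittico}), so the residual potential $\frac{\ell^{2}}{|y|^{2}}-\frac{1}{|x|}$ does vanish at infinity, and the effective nonlinearity $\frac{\Omega}{2}s^{2}-\frac{s^{p}}{p}$ is negative at a suitable $s_{0}$; the plateau functions supported on annuli $R_{n}\le|y|\le 2R_{n}$ then make the negative volume term $O(R_{n}^{k})$ dominate the gradient term $O(R_{n}^{k-1})$. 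So your opening remark that the Lemma \ref{min0} construction ``does not apply'' is not quite right --- it applies after this reshuffling, which is exactly what the paper's phrase ``choosing $s_{0}$ such that $\frac{\Omega}{2}s_{0}^{2}-\frac{s_{0}^{p}}{p}<0$'' encodes. Your amplitude-scaling argument $G(\rho\phi)=\frac{\rho^{2}}{2}\|\phi\|^{2}-\frac{\rho^{p}}{p}\|\phi\|_{L^{p}}^{p}<0$ for large $\rho$ is nonetheless valid and shorter: $\rho\phi\in H\cap B_{\rho}$, the norm $\|\phi\|$ is finite since $\operatorname{supp}\phi$ avoids $\{y=0\}$, and $p>2$ does the rest. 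What it buys is brevity; what it costs is generality --- it leans entirely on the superquadratic homogeneity of the power nonlinearity, whereas the spreading construction works for any $W$ satisfying only (\ref{hyp3}), which is why the paper keeps that mechanism as its template.
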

\begin{proof}
It follows as in Lemma \ref{min0} by choosing $s_0$ such that
$\left(\frac {\Omega}{2} s_0^2-\frac{s_0^p}{p}\right)<0$.
\end{proof}
\begin{lem}\label{nonvan2}
Let $G_{\rho}<0$ and $u_n$ be a minimizing sequence. Then, up to
translations in $\R$, we have $u_n \rightharpoonup \bar u\neq 0$.
\end{lem}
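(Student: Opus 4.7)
The plan is to mimic the proof of Lemma \ref{van}, adapted to the nonlinearity $-u^p/p$ and to the non-translation-invariant potential $V(x)=\ell^2/|y|^2+\Omega-1/|x|$.

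The first step is to extract a non-vanishing lower bound on $\int_{\R^N}|u_n|^p\,dx$. Since $V(x)\ge 0$ by Lemma \ref{e-norma}, we have $\|u_n\|^2\ge 0$, and from $G(u_n)\to G_\rho<0$ it follows that
$$
\frac{1}{p}\int_{\R^N}|u_n|^p\,dx \;\ge\; \frac{1}{2}\|u_n\|^2-G(u_n) \;\ge\; -G_\rho+o(1) \;>\;0,
$$
so $\int|u_n|^p\ge\beta>0$ for $n$ large. By Lemma \ref{stimprio2}, $\|u_n\|$ is bounded, and combined with $V\ge 0$ and $\|u_n\|_{L^2}=\rho$, the sequence $\set{u_n}$ is bounded in $H^1(\R^N)$.

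Next, I run exactly the cube argument of Lemma \ref{van}, with the exponent $q$ replaced by $p\in(2,2+4/N)\subset(2,2^\ast)$. Partitioning $\R^{N-2}$ into unit cubes $Q_j$ indexed by $j\in\Z^{N-2}$ and setting $S_j=\R^2\times Q_j$, the Hölder estimate together with the $H^1$-bound produces integer vectors $j^n$ such that
$$
\int_{S_{j^n}}|u_n|^p\,dx \;\ge\; c>0.
$$
Define $v_n(y,z)=u_n(y,z-j^n)$. This function remains cylindrically symmetric in $y$ and is bounded in $H^1(\R^N)$ (translation in $z$ preserves $|\nabla u|^2$ and $u^2$), so up to a subsequence $v_n\rightharpoonup\bar u$ in $H^1$. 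The cylindrical compact embedding into $L^p(\R^2\times Q_0)$ for the bounded cube $Q_0$ (the Esteban--Lions result underlying Lemma \ref{compactness}, which holds already for the unweighted $H^1$-norm) then yields $v_n\to\bar u$ strongly in $L^p(\R^2\times Q_0)$. Hence $\int_{\R^2\times Q_0}|\bar u|^p\ge c>0$, so $\bar u\neq 0$.

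The only subtle point relative to Lemma \ref{van} is that the potential $V(x)$ depends on both $y$ and $z$, so translation in $z$ does not preserve $\|\cdot\|$, and the translated sequence $v_n$ is no longer in the minimization problem — in particular it need not be bounded in the weighted norm used to define $H$. However, we only need enough regularity of the translates to extract a weak limit and apply cylindrical-symmetry compactness on a bounded $z$-slab, and plain $H^1$-boundedness (which is free from $V\ge 0$ and the $L^2$ constraint) suffices for that purpose.
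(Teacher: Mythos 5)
Your lower bound $\tfrac1p\int|u_n|^p=\tfrac12\|u_n\|^2-G(u_n)\ge -G_\rho+o(1)>0$, the cube decomposition, and the Esteban--Lions compact embedding on $\R^2\times Q_0$ all run exactly as in Lemma \ref{van} and are fine. The gap is in your final paragraph, where you identify the correct subtlety --- the Coulomb term destroys translation invariance in $z$ --- and then dismiss it. It cannot be dismissed: the only purpose of Lemma \ref{nonvan2} is to verify hypothesis (\ref{(1)}) of Theorem \ref{main-abs}, which concerns a \emph{minimizing} sequence. ``Up to translations'' is meant as in Lemma \ref{van}: after translating, one must still have a minimizing sequence, because it is to the translated sequence that the subadditivity and Ekeland arguments of Theorem \ref{main-abs} are subsequently applied. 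A non-vanishing weak limit of a sequence that is merely bounded in $H^1$ but no longer minimizes $G$ on $H\cap B_\rho$ is of no use. This is precisely the point the paper's proof addresses, and indeed the \emph{only} point it adds to ``as in Lemma \ref{van}'': since the Coulomb potential enters $G$ with a negative sign, the translation is claimed to satisfy $G(v_n)\le G(u_n)$, whence $G_\rho\le G(v_n)\le G(u_n)\to G_\rho$ and $v_n$ is again a minimizing sequence.

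A secondary remark: your worry that $v_n$ ``need not be bounded in the weighted norm'' points in the wrong direction. Since $0\le V(x)\le \ell^2/|y|^2+\Omega$ and this majorant is $z$-translation invariant, one has $\|v_n\|^2\le\int\bigl(|\nabla u_n|^2+(\ell^2/|y|^2+\Omega)u_n^2\bigr)dx$, so $v_n\in H\cap B_\rho$ with uniformly bounded norm and in particular $G(v_n)\ge G_\rho$ automatically. What is genuinely at stake is the reverse estimate $G(v_n)\le G(u_n)+o(1)$, i.e.\ that bringing the concentration cube to the origin does not increase the Coulomb energy $-\tfrac12\int u^2/|x|$. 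To complete your argument you must either establish this inequality for your choice of $j^n$ (or for a modified choice of translation minimizing $G$ over translates compatible with the concentration cube), or show directly that the cubes $j^n$ remain bounded so that no translation is needed.
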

\begin{proof}
It follows as in Lemma \ref{van} by choosing simply by noticing
that translations along the $z$ variables make the functional $G$
to decrease. Hence $v_n(x):=u_n(|y|,z+j^n)$ satisfies $G(v_n)\leq G(u_n)$.
\end{proof}

\begin{proof}[Proof of Theorem \ref{main2}]
By Lemmas  \ref{stimprio2}, \ref{neg2} and \ref{nonvan2}, the functional $G$ satisfies (\ref{(1)})  of Theorem \ref{main-abs}. Conditions (\ref{(2)})-(\ref{(5)}) are obtained as in Lemmas \ref{brezis-lieb} and \ref{t-ass}. By Lemma \ref{convexity2} we have the subadditivity condition $I_{\rho}<I_{\mu}+I_{_{\sqrt{\rho^2-\mu^2}}}$ that in Theorem \ref{main-abs} is guaranteed by condition (\ref{(3)}).

Hence we can apply Theorem \ref{main-abs} to $G$. The proof follows from the continuity of $G$.
\end{proof}

\end{document}